\pgfplotsset{compat=1.18}
\newtheorem{theorem}{Theorem}[subsection]
\newtheorem{corollary}{Corollary}[theorem]
\definecolor{lbrows}{rgb}{0.12,0.47,0.71}
\definecolor{lbnnz}{rgb}{1.0,0.50,0.05}
\definecolor{lbcyclic}{rgb}{0.17,0.63,0.17}
\definecolor{dsurl}{rgb}{0.84,0.15,0.16}
\definecolor{dsnews20}{rgb}{0.12,0.47,0.71}
\definecolor{dsrcv1}{rgb}{0.17,0.63,0.17}
\definecolor{wakegold}{rgb}{0.619, 0.494, 0.2196}
\definecolor{coolgray}{rgb}{0.325, 0.337, 0.353}
\pgfplotsset{
  plotstyle/.style={
    width=0.49\textwidth, height=0.36\textwidth,
    tick label style={font=\footnotesize},
    label style={font=\small},
    legend style={font=\footnotesize, draw=none, fill=white, fill opacity=0.85},
    grid=major, grid style={gray!25},
    every axis plot/.append style={line width=0.8pt, mark size=2pt},
  }
}
\tikzset{
    ncbar angle/.initial=90,
    ncbar/.style={
        to path=(\tikztostart)
        -- ($(\tikztostart)!#1!\pgfkeysvalueof{/tikz/ncbar angle}:(\tikztotarget)$)
        -- ($(\tikztotarget)!($(\tikztostart)!#1!\pgfkeysvalueof{/tikz/ncbar angle}:(\tikztotarget)$)!\pgfkeysvalueof{/tikz/ncbar angle}:(\tikztostart)$)
        -- (\tikztotarget)
    },
    ncbar/.default=0.5cm,
}
\tikzset{round left paren/.style={ncbar=0.5cm,out=120,in=-120}}
\tikzset{round right paren/.style={ncbar=0.5cm,out=60,in=-60}}
\DeclareMathOperator{\diag}{diag}
\title{Communication-Efficient, 2D Parallel Stochastic Gradient Descent for Distributed-Memory Optimization}
\date{}
\author{
	Aditya Devarakonda\\devaraa@wfu.edu\\\textit{Wake Forest University}
	\and Ramakrishnan Kannan\\kannanr@ornl.gov\\\textit{Oak Ridge National Laboratory}
}
\begin{document}
\maketitle

\begin{abstract}
    Distributed-memory stochastic gradient descent (SGD) for sparse workloads is performance-limited by two forms of irregularity, namely inter-process communication and nonuniform nonzero distributions.
    Inter-processor communication cost grows irregularly based on the SGD variant employed to solve the problem.
    Heavy-tailed nonzero distributions typical of real sparse data drive load imbalance that further inflates per-iteration runtime.
    This paper develops HybridSGD, a 2D-parallel SGD method that generalizes existing 1D $s$-step SGD and 1D Federated SGD with Averaging (FedAvg) into a continuous family by employing a 2D processor mesh, and shows that the optimal split is dataset- and machine-dependent.
    We derive a closed-form $\alpha$-$\beta$-$\gamma$ cost model whose optimum interpolates between $s$-step SGD and Federated SGD, refine it with cache-aware compute, rank-aware bandwidth, and load-imbalance terms calibrated on a NERSC Cray~EX system, and use it to select the algorithmic and partitioning parameters predictively.
    We further introduce a cache-friendly data partitioner that simultaneously bounds nonzero imbalance and per-rank cache footprint, replacing static, balanced partitioners that we show are suboptimal for skewed data.
    Our irregularity-aware cost model predicts the winning partitioner on every measured dataset and partitioner combination.
    Finally, on the LIBSVM benchmark suite, HybridSGD achieves $\mathbf{53\times}$ and $\mathbf{14.6\times}$ time-to-target-loss speedups on the sparse, high-dimensional url and news20 datasets over FedAvg, matches FedAvg on rcv1, and is outperformed by FedAvg on the dense epsilon dataset.
    This crossover is predicted by the cost model and shows that HybridSGD is most beneficial in sparse, high-dimensional, communication- and skew-limited regimes, while simpler FedAvg can remain preferable when dense local computation dominates.
    Our software is available at \url{https://doi.org/10.5281/zenodo.20431329}.
\end{abstract}

\section{Introduction}\label{sec:intro}
Stochastic gradient descent (SGD) and its communication-efficient variants are the driving iterative optimizers behind much of modern machine learning.
When the underlying problem is sparse and distributed across many processors, two forms of irregularity shape the runtime behavior of distributed SGD.
First, inter-processor communication at every iteration \cite{thakur05,rabenseifner_optimization_2004} requires careful algorithm design to ensure that the communication does not dominate local sparse BLAS computation.
Second, the sparsity structure of the input dataset may produce work imbalance between processors which can interact non-trivially with the communication and local sparse computation costs. 
We show that real sparse data (e.g., rcv1, news20, and url) from the LIBSVM repository~\cite{libsvm-repo} exhibit heavy-tailed nonzero-per-row and nonzero-per-column distributions that produce work imbalances that inflate per-iteration runtime and interact non-trivially with the choice of data partitioner.
Furthermore, models such as the classical $\alpha$-$\beta$-$\gamma$ cost model (i.e., Hockney's model~\cite{hockney_communication_1994}) are not well-suited for problems that contain data-dependent work imbalance.

Prior work has approached the communication cost issue from two distinct algorithmic directions.
Communication-avoiding ($s$-step) SGD \cite{devarakonda_avoiding_2020} reorganizes the inner loop using recurrence unrolling to group $s$ consecutive gradient steps into a single Allreduce.
This approach reduces the communication cost at the expense of additional bandwidth and computation but does not change the convergence rate.
Communication-efficient methods such as FedAvg \cite{stich2018local} perform $\tau$ independent local updates between Allreduces.
This approach reduces the per-iteration latency and bandwidth costs at the expense of convergence, where the model drifts from the global optimum.
Both approaches are limited to 1D processor grid distributions which only allow a fixed data partitioning scheme.
Specifically, $s$-step SGD partitions by columns of $\bm A$, while FedAvg partitions by rows.
Our approach is based on the observation that the resulting partitionings are \emph{compatible} because they live on orthogonal axes of a 2D processor mesh ($p = p_r \times p_c$), and we exploit this to derive a family of 2D-parallel SGD algorithms (HybridSGD) that can navigate a continuous performance trade-off between the two 1D baselines as extremes in each processor mesh dimension.

Since the optimal mesh, data partitioner, and $s$-step recurrence unrolling length all depend jointly on the dataset's nonzero skew and on the target machine's hardware parameters, we also integrate hardware and data irregularity-aware performance modeling into HybridSGD.
The closed-form performance model that we develop is coupled with empirical measurements of the hardware parameters $\alpha$, $\beta$, and $\gamma$.
The empirical measurements highlight an inflection point in MPI Allreduce bandwidth at the per-node rank boundary.
This insight yields a parameter-free topology rule for the HybridSGD mesh dimensions that accurately predict the winning $(p_r, p_c)$ combination on a NERSC Cray~EX system.
Furthermore, we show that our load-imbalance aware model accurately predicts the empirical ranking of partitioners on every dataset and partitioner combination tested.

The contributions of this work are:
\begin{enumerate}
    \item \textbf{HybridSGD, a 2D-parallel SGD framework} that generalizes $s$-step SGD \cite{devarakonda_avoiding_2020} and FedAvg \cite{stich2018local} by parameterizing the processor grid as $p = p_r \times p_c$, where each 1D baseline is an extreme grid dimension.
    We present a C++/MPI implementation built on Intel MKL sparse BLAS with three selectable column partitioners (rows, nonzero-greedy, cyclic).
    \item \textbf{A closed-form analytic cost model} for HybridSGD derived from Hockney's $\alpha$-$\beta$-$\gamma$ model, with closed-form optima for $s$ (recurrence unrolling length) and $b$ (batch size), and a topology-respecting rule for the mesh split $(p_r, p_c)$ that selects the row team to one node so the frequent row Allreduce stays within a node.
    The rule requires only the per-node rank count and per-core cache capacity to accurately predict the optimum on every dataset tested.
    This rule fixes the mesh dimension, after which the model ranks the remaining $(s, b, \tau, \text{partitioner})$ configurations and identifies the operating regime; we use it as a selection tool rather than an absolute-runtime predictor.
    \item \textbf{An irregularity-aware partitioning study} that formulates partitioner selection as a two-objective constrained problem that minimizes the nonzero imbalance ($\kappa$) subject to a per-rank cache footprint.
    We show that a greedy partitioner which aims to balance nonzeros may lead to data distributions that overload a subset of processors (i.e., lead to cache spill), especially on column-skewed data. 
    We introduce a cyclic partitioner that satisfies both objectives in expectation and validate the model using real (url, news20, rcv1) and skew-controlled synthetic data.
    \item \textbf{Strong scaling speedups} on the LIBSVM benchmark data measured on a NERSC Cray~EX system.
    HybridSGD attains $\mathbf{53\times}$ speedup on url and $\mathbf{14.6\times}$ on news20, matches FedAvg on rcv1, and is outperformed by FedAvg on the dense epsilon dataset.
    The qualitative crossover point is predicted by our cost model and confirmed empirically.
\end{enumerate}

\section{Background}\label{sec:background}
This section briefly surveys prior and recent work that aim to improve communication efficiency of distributed optimization in theory and practice.
These techniques can broadly be categorized based on whether the techniques maintain the convergence rates and behavior of classical optimization methods (i.e., communication-avoiding methods) or whether they weaken convergence rates and guarantees for better performance (i.e., communication-efficient methods).
\subsection{Related work}\label{sec:related}
$s$-step methods were originally developed to reduce the frequency of inter-process synchronization in Krylov methods for linear systems and spectral problems \cite{chronopoulos_class_nodate,kim_efficient_1992,chronopoulos_parallel_1996,chronopoulos_s-step_1991,chronopoulos89}, with subsequent advances in matrix-powers and tall-skinny QR kernels \cite{hoemmen}, stability analysis and error correction \cite{carson_communication-avoiding_2015,carson_residual_2014,Carson15,carson14}, and parallel implementations such as $s$-step BiCGSTAB \cite{williams_s-step_2014} achieving $2.5\times$ speedups on scientific applications.
The technique was later generalized to nonlinear, convex optimization \cite{shao_scalable_2024,soori_reducing_2018,devarakonda_avoiding_2018,devarakonda_avoiding_2019,devarakonda_avoiding_2020,devarakonda_2018_thesis,zhu_p-packsvm_2009}, yielding $s$-step variants of coordinate descent, SGD, and subsampled Newton with speedups up to $5\times$ in parallel cluster and cloud environments and no change to convergence behavior.
A complementary line of work trades sequential consistency or solution accuracy for parallel performance.
Asynchronous shared-memory SGD and CD \cite{you_asynchronous_2016,niu_hogwild_2011,sallinen_high_2016} eliminate blocking synchronizations under sparse update assumptions.
Distributed-memory approaches reduce communication via hierarchical low-rank decomposition for kernel ridge regression \cite{chavez_scalable_2020}, K-means redistribution for kernel SVM \cite{you_ca-svm_2015}, and local optimization with deferred averaging \cite{cocoa,ma17,stich2018local,you_ca-svm_2015}, which is the basis for federated learning.
In this work we integrate the $s$-step and federated learning approaches into a single 2D-parallel algorithm with continuous trade-offs between the two extremes.
A separate line of work reduces AllReduce traffic via gradient compression by leveraging quantization~\cite{alistarh_qsgd_2017} and sparsification~\cite{lin_deep_2018} to shrink each $n$-word gradient message by $16$--$1000\times$ at the cost of biased or noisy aggregation.

HybridSGD takes a structurally different approach by partitioning the weight vector $\bm{x}$ across $p_c$ column-team ranks.
This reduces the message size of weights synchronization to $n/p_c$ words losslessly.
The row-team is then utilized for the $s$-step recurrence unrolled computations which carries $O(s^2b^2/p_r^2)$ words and is independent of $n$.
We analyze the computation, latency, and bandwidth costs of HybridSGD in \Cref{sec:cost}.
The HybridSGD approach can integrate both asynchronous communication and compression techniques as these methodologies are orthogonal to the HybridSGD approach.
However, in this work, we aim to fully characterize the HybridSGD framework before combining orthogonal strategies that require rigorous convergence analysis to show effectiveness.
While alternatives to SGD exist for logistic regression (coordinate descent, IRLS, quasi-Newton, higher-order methods), we focus on SGD as it is the workhorse optimization method for many machine learning problems.
We therefore focus on SGD and leave a broader methods generalization to future work.

\subsection{Notation}\label{sec:notation}
Our datasets are sparse matrix-vector pairs $(\bm{A}, \bm{y})$ with $\bm{A} \in \mathbb{R}^{m \times n}$ ($m$ samples, $n$ features) and binary labels $\bm{y} \in \{+1, -1\}^m$.
Bold uppercase letters denote matrices, bold lowercase vectors, and nonbold lowercase scalars.
All Greek letters are tunable scalar quantities.
Subscripts are used either to index iterations of an algorithm or to index entries of a vector/matrix.
We distinguish the latter by typeface.
For example, $y_i$ is the $i$-th entry of $\bm{y}$, $\bm{a}_{i,:}$ is the $i$-th row of $\bm{A}$, and $a_{i,j}$ is the $(i,j)$ entry.
The function $\diag(\bm{y}) \colon \mathbb{R}^m \mapsto \mathbb{R}^{m \times m}$ maps $\bm{y}$ to a diagonal matrix with $d_{i,i} = y_i$.
Bracketed superscripts identify per-processor quantities, e.g.\ $\bm{A}^{[i]}$ is the local block of $\bm A$ on processor $i$.

\section{Optimization Problem}\label{sec:problem}
\begin{algorithm}[t]
    \caption{Stochastic Gradient Descent (SGD) Algorithm to solve \Cref{eq:logreg}}
    \begin{algorithmic}[1]
        \Procedure{SGD}{$\bm{A}, \bm{y}, \bm{x}_0, b, \eta, K$}
            \For{$k = 1, 2, \ldots, K$}
                \State{$\bm{S}_k = \begin{bmatrix} \bm{e}_{i_1}^\intercal\\ \vdots\\ \bm{e}_{i_b}^\intercal \end{bmatrix}~\text{s.t.}~i_j~\sim~[m]~\forall~j=1 \ldots b$}
                \Comment{Sample rows from $\bm{I}^{m \times m}$ to construct $\bm{S}_k$}
                \State{$\bm{u}_k = \frac{\vec{\bm{1}}}{\vec{\bm{1}} + \exp\bigl( \bm{S}_k \cdot \diag(\bm{y})\cdot\bm{A} \cdot \bm{x}_{k-1}\bigr)}$} \label{alg:sgd-uk}
                \Comment{Apply the sigmoid function to a $b$-dimensional vector}
                \State{$\bm{g}_k = -\frac{1}{b}\bigl(\bm{S}_k \cdot \diag(\bm{y})\cdot\bm{A}\bigr)^\intercal \bm{u}_k$} \label{alg:sgd-gk}
                \Comment{Compute $n$-dimensional gradient w.r.t chosen samples}
                \State{$\bm{x}_k = \bm{x}_{k-1} - \eta \cdot \bm{g}_k$}
            \EndFor\\
            \Return{$\bm{x}_K$}
        \EndProcedure
    \end{algorithmic}\label{alg:sgd}
\end{algorithm}
Given a matrix $\bm{A} \in \mathbb{R}^{m \times n}$ and a label vector $\bm{y} \in \{+1, -1\}^m$, we seek $\bm{x} \in \mathbb{R}^n$ which solves the unregularized logistic regression problem
\begin{equation}\label{eq:logreg}
    f(\bm{A}, \bm{y}, \bm{x}) := \min_{\bm{x} \in \mathbb{R}^{n}}\frac{1}{m} \sum_{i = 1}^{m} \log\bigl(1 + \exp(-y_i \cdot \bm{a}_{i,:}\bm{x})\bigr).
\end{equation}
Since \eqref{eq:logreg} does not have a closed-form solution, we use the gradient descent formulation, $\bm{x}_k = \bm{x}_{k-1} - \eta \bm{g}_k$, with a fixed step-size, $\eta$, where
\begin{align}
    \bm{u}_k &= \frac{\vec{\bm{1}}}{\vec{\bm{1}} + \exp\bigl(\diag(\bm{y}) \cdot \bm{A}  \cdot \bm{x}_{k-1}\bigr)}\label{eq:sigmoidmatvec}\\
    \bm{g}_k &= -\frac{1}{m}\bigl(\diag(\bm{y}) \cdot \bm{A}\bigr)^\intercal \bm{u}_k.\label{eq:gradmatvec}
\end{align}
We pre-compute $\diag(\bm{y}) \cdot \bm A$ once by scaling rows of $\bm A$ by the corresponding labels.
Each iteration therefore consists of one SpMV to form $\bm u_k$, a nonlinear sigmoid step on an $m$-dimensional vector, and a transposed SpMV to form $\bm g_k$.
Mini-batch SGD (\Cref{alg:sgd}) sub-samples $b$ rows of $\bm A$ via $\bm{S}_k \in \mathbb{R}^{b \times m}$ built from $b$ rows of the identity, reducing per-iteration work by $b/m$ while converging faster per unit computation in practice~\cite{bottou_2018}.
We design parallel variants of SGD throughout.

\section{Parallel Algorithms Design}\label{sec:design}
\begin{figure*}[t]
    \centering
    \begin{tikzpicture}
\begin{groupplot}[
  group style={
    group name=grp,
    group size=3 by 1,
    horizontal sep=0.45cm,
  },
  width=0.30\textwidth, height=0.36\textwidth,
  enlargelimits=false,
  xmin=-0.5, xmax=32, ymin=-0.5, ymax=64,
  xtick=\empty, ytick=\empty,
  axis on top, axis line style={draw=none},
  tick style={draw=none},
  scale only axis,
  title style={font=\small, yshift=-0.4em},
  every axis plot/.append style={mark=*, mark size=0.7pt},
]

\nextgroupplot[title={1D-row (FedAvg)}]
\fill[lbrows!10] (axis cs:0, 48) rectangle (axis cs:32, 64);
\fill[lbnnz!10] (axis cs:0, 32) rectangle (axis cs:32, 48);
\fill[lbcyclic!10] (axis cs:0, 16) rectangle (axis cs:32, 32);
\fill[gray!20] (axis cs:0, 0) rectangle (axis cs:32, 16);
\draw[lbrows, very thick] (axis cs:0, 48) rectangle (axis cs:32, 64);
\draw[lbnnz, very thick] (axis cs:0, 32) rectangle (axis cs:32, 48);
\draw[lbcyclic, very thick] (axis cs:0, 16) rectangle (axis cs:32, 32);
\draw[gray!80, very thick] (axis cs:0, 0) rectangle (axis cs:32, 16);
\node[font=\scriptsize, lbrows] at (axis cs:30, 60) {$A^{[0]}$};
\node[font=\scriptsize, lbnnz] at (axis cs:30, 44) {$A^{[1]}$};
\node[font=\scriptsize, lbcyclic] at (axis cs:30, 28) {$A^{[2]}$};
\node[font=\scriptsize, gray!80!black] at (axis cs:30, 12) {$A^{[3]}$};
\addplot[only marks, mark=*, mark size=0.7pt, black]
  table[x=col, y=row_inv, col sep=comma]{fig/data/partitioning_pattern.csv};

\nextgroupplot[title={2D (HybridSGD)}]
\fill[lbrows!10] (axis cs:0, 32) rectangle (axis cs:16, 64);
\fill[lbnnz!10] (axis cs:16, 32) rectangle (axis cs:32, 64);
\fill[lbcyclic!10] (axis cs:0, 0) rectangle (axis cs:16, 32);
\fill[gray!20] (axis cs:16, 0) rectangle (axis cs:32, 32);
\draw[lbrows, very thick] (axis cs:0, 32) rectangle (axis cs:16, 64);
\draw[lbnnz, very thick] (axis cs:16, 32) rectangle (axis cs:32, 64);
\draw[lbcyclic, very thick] (axis cs:0, 0) rectangle (axis cs:16, 32);
\draw[gray!80, very thick] (axis cs:16, 0) rectangle (axis cs:32, 32);
\node[font=\scriptsize, lbrows] at (axis cs:14, 61) {$A^{[0,0]}$};
\node[font=\scriptsize, lbnnz] at (axis cs:30, 61) {$A^{[0,1]}$};
\node[font=\scriptsize, lbcyclic] at (axis cs:14, 29) {$A^{[1,0]}$};
\node[font=\scriptsize, gray!80!black] at (axis cs:30, 29) {$A^{[1,1]}$};
\addplot[only marks, mark=*, mark size=0.7pt, black]
  table[x=col, y=row_inv, col sep=comma]{fig/data/partitioning_pattern.csv};

\nextgroupplot[title={1D-column ($s$-step SGD)}]
\fill[lbrows!10] (axis cs:0, 0) rectangle (axis cs:8, 64);
\fill[lbnnz!10] (axis cs:8, 0) rectangle (axis cs:16, 64);
\fill[lbcyclic!10] (axis cs:16, 0) rectangle (axis cs:24, 64);
\fill[gray!20] (axis cs:24, 0) rectangle (axis cs:32, 64);
\draw[lbrows, very thick] (axis cs:0, 0) rectangle (axis cs:8, 64);
\draw[lbnnz, very thick] (axis cs:8, 0) rectangle (axis cs:16, 64);
\draw[lbcyclic, very thick] (axis cs:16, 0) rectangle (axis cs:24, 64);
\draw[gray!80, very thick] (axis cs:24, 0) rectangle (axis cs:32, 64);
\node[font=\scriptsize, lbrows] at (axis cs:4, 61) {$A^{[0]}$};
\node[font=\scriptsize, lbnnz] at (axis cs:12, 61) {$A^{[1]}$};
\node[font=\scriptsize, lbcyclic] at (axis cs:20, 61) {$A^{[2]}$};
\node[font=\scriptsize, gray!80!black] at (axis cs:28, 61) {$A^{[3]}$};
\addplot[only marks, mark=*, mark size=0.7pt, black]
  table[x=col, y=row_inv, col sep=comma]{fig/data/partitioning_pattern.csv};

\end{groupplot}

\draw[<->, very thick, coolgray]
  ([yshift=-4mm]grp c1r1.south west) -- ([yshift=-4mm]grp c3r1.south east);
\node[anchor=north west, font=\scriptsize, coolgray]
  at ([yshift=-4mm]grp c1r1.south west) {endpoint: $p_r{=}p$};
\node[anchor=north east, font=\scriptsize, coolgray]
  at ([yshift=-4mm]grp c3r1.south east) {endpoint: $p_c{=}p$};
\node[anchor=north, font=\footnotesize, align=center, text width=0.74\textwidth]
  at ([yshift=-6.5mm]$(grp c1r1.south)!0.5!(grp c3r1.south)$)
  {HybridSGD treats 1D $s$-step SGD and FedAvg as fixed endpoints of a 2D design space and selects the best interior mesh $(p_r,p_c)$ for each dataset and machine.};
\end{tikzpicture}
    \caption{The 2D design space of HybridSGD, illustrated on a sparse, irregular matrix ($m{=}64$, $n{=}32$, $\sim 12\%$ density).
    Solid dots mark nonzeros, and colored rectangles delineate the per-rank partitions for $p{=}4$.
    The two 1D layouts are the fixed endpoints of the space: 1D-row partitioning (\emph{left}, FedAvg, $p_r{=}p$) requires each rank to store a local $n$-dimensional vector, while 1D-column partitioning (\emph{right}, $s$-step SGD, $p_c{=}p$) shrinks each rank's column dimension to $n/p$ but forces every iteration to communicate the global $b$-vector $\bm{u}_k$.
    2D partitioning (\emph{middle}, HybridSGD, $p_r{=}p_c{=}2$ shown) is the tunable interior, reducing per-rank dimensions to $b/p_r \times n/p_c$ at the cost of two Allreduces per iteration.
    HybridSGD navigates between the two 1D endpoints by choosing the mesh split $(p_r, p_c)$ that is best for the dataset and machine.}
    \label{fig:partitioning}
\end{figure*}
The main computation and communication bottlenecks in \Cref{alg:sgd} are the SpMV with the subsampled matrix $\bm{S}_k \cdot \diag(\bm{y}) \cdot \bm{A}$ at \Cref{alg:sgd-uk,alg:sgd-gk}.
In a parallel setting, forming $\bm{u}_k$ and/or $\bm{g}_k$ requires communication.
\Cref{fig:partitioning} shows the three layouts.
Under 1D-row partitioning, \Cref{alg:sgd-uk} requires an Allreduce on an $n$-dimensional vector during the transposed-SpMV.
Under 1D-column partitioning, \Cref{alg:sgd-uk} requires an Allreduce on the $b$-dimensional vector, $\bm{u}_k$.
Under 2D partitioning ($p = p_r \times p_c$) both SpMV computations require an Allreduce but are smaller ($b/p_r$ and $n/p_c$ words respectively).
These partitionings expose a rich performance trade-off space which we analyze in \Cref{sec:cost}.

\subsection{Communication-efficient SGD}
We focus on two communication-efficient 1D variants:
federated SGD with Averaging (FedAvg) \cite{stich2018local} which combines 1D-row partitioned SGD with deferred communication 
and $s$-step SGD \cite{devarakonda_avoiding_2020} which combines 1D-column partitioned SGD with recurrence unrolling.
\begin{algorithm}
    \caption{Federated SGD with Averaging (FedAvg) Algorithm to solve \Cref{eq:logreg}}
    \begin{algorithmic}[1]
        \Procedure{FedAvg}{$\bm{A}, \bm{y}, \bm{x}_0, b, \eta, \tau, \tilde{K}$}
        \State{$\begin{bmatrix}\bm{A}^{[1]}, \bm{y}^{[1]}\\ \bm{A}^{[2]}, \bm{y}^{[2]}\\ \vdots \\ \bm{A}^{[p]}, \bm{y}^{[p]}\end{bmatrix} = \bm{A}, \bm{y}$}
        \Comment{Partition $\bm{A}, \bm{y}$ row-wise across $p$ processors}
        \For{$k = 1, 2, \ldots, \tilde{K}$}
            \State{$\bm{\tilde x}_k^{[i]} = $ \Call{SGD}{$\bm{A}^{[i]}, \bm{y}^{[i]}, \bm{x}_{k-1}, \lceil b/p \rceil, \eta, \tau$}}\label{alg:fedavg-sgdcall}
            \Comment{Compute in parallel on all processors $i = 1, \ldots p$ }
            \State{$\bm{x}_k = \frac{1}{p} \sum_{i = 1}^{p} \bm{\tilde x}_k^{[i]}$}
            \EndFor\\
            \Return {$\bm{x}_{\tilde{K}}$}
            \EndProcedure
        \end{algorithmic}\label{alg:fedavg}
\end{algorithm}
\Cref{alg:fedavg} partitions $(\bm A, \bm y)$ into 1D-row layout across $p$ processors and performs $\tau$ local sequential SGD iterations on each processor with an Allreduce of the local solutions $\tilde{\bm{x}}_k^{[i]}$ with averaging.
The local-SGD analysis \cite{stich2018local} (summarized in the FedAvg row of \Cref{tab:comp-costs}) shows that FedAvg trades convergence for performance as $p$ and $\tau$ grow.
$\tau = 1$ degenerates to synchronous mini-batch SGD on an effective global batch of $p b$, while $p = 1$ reduces to sequential SGD.

\begin{algorithm}[t]
    \caption{$s$-step Stochastic Gradient Descent ($s$-step SGD) Algorithm to solve \Cref{eq:logreg}}
    \begin{algorithmic}[1]
        \Procedure{S-StepSGD}{$\bm{A}, \bm{y}, \bm{x}_0, b, \eta, s, K$}
            \For{$k = 0, 1, \ldots, K/s - 1$}
                \For{$j = 1, 2,\ldots, s$}
                    \State{$\bm{S}_{sk+j} = \begin{bmatrix} \bm{e}_{i_1}^\intercal\\ \vdots\\ \bm{e}_{i_b}^\intercal \end{bmatrix}~\text{s.t.}~i_l~\sim~[m]~\forall~l=1 \ldots b$}
                \EndFor
                \Comment{Construct $s$ sampling matrices}
                \State{$\bm{Y} = \begin{bmatrix}\bm{S}_{sk + 1}\\ \vdots\\ \bm{S}_{sk + s}\end{bmatrix} \cdot \diag(\bm{y}) \cdot \bm{A}$}
                \State{$\bm{G} =$ \Call{tril}{$\bm{Y}\bm{Y}^\intercal$}}\label{alg:sstep-sgd-gram}
                \Comment{Compute lower-triangle of Gram matrix}
                \State{$\bm{v} = \bm{Y} \cdot \bm{x}_{sk}$}
                \Comment{Compute partial contribution necessary to compute $\bm{u}_{sk+j}$}
                \For{$j = 1, \ldots, s$}\label{alg:sstep-sgd-correction}
                    \State{$\bm{u}_{sk+j} = [\bm{e}_{(j-1)b + 1} | \ldots | \bm{e}_{jb}]^\intercal \cdot \bm{v}$}
                    \For{$l = 1, \ldots, j-1$}
                        \State{$\bm{u}_{sk + j} = \bm{u}_{sk + j} + \frac{\eta}{b} \cdot \bigl([\bm{e}_{(j-1)b + 1} | \ldots | \bm{e}_{jb}]^\intercal \cdot \bm{G} \cdot  [\bm{e}_{(l-1)b + 1} | \ldots | \bm{e}_{lb}]\bigr) \cdot \bm{u}_{sk+l}$}
                    \EndFor\label{alg:sstep-sgd-correction-end}
                    \State{$\bm{u}_{sk+j} = \frac{\vec{\bm{1}}}{\vec{\bm{1}} + \exp(\bm{u}_{sk+j})}$}
                \EndFor
                \Comment{Compute $\bm{u}_{sk + j}$ for $j = 1, \dots, s$ with correction due to deferred update}
                \State{$\bm{x}_{sk+s} = \bm{x}_{sk} + \frac{\eta}{b} \cdot \bm{Y}^\intercal \begin{bmatrix}\bm{u}_{sk + 1}\\ \vdots \\ \bm{u}_{sk + s}\end{bmatrix}$}
                \Comment{Compute solution update by combining $s$ gradients}
            \EndFor\\
            \Return{$\bm{x}_K$}
        \EndProcedure
    \end{algorithmic}\label{alg:sstep-sgd}
\end{algorithm}
\Cref{alg:sstep-sgd} defers communication for $s$ iterations without affecting convergence behavior \cite{devarakonda_avoiding_2020} at the cost of additional computation and bandwidth because message sizes grow proportional to $sb$.
\Cref{alg:sstep-sgd-gram} forms a Gram matrix $\bm{G}$ whose blocks correct $\bm{u}_{sk+j}$ in \Cref{alg:sstep-sgd-correction} and \ref{alg:sstep-sgd-correction-end} for the deferred solution updates.
1D-column partitioning of $\bm A$ yields the cheapest communication for \Cref{alg:sstep-sgd}.
Any other layout introduces an additional 1D or 2D matrix multiply to form $\bm G$.

\paragraph{HybridSGD Design}
We combine the two by arranging $p = p_r \times p_c$ processors into a 2D grid where row teams perform FedAvg on $n/p_c$ fractions of $\bm x$ and column teams perform $s$-step SGD on $b/p_r$ independent batches of rows of $\bm A$.
Algorithmically, HybridSGD is obtained by 2D-partitioning $\bm A$ in \Cref{alg:fedavg} and replacing the SGD call with a call to \Cref{alg:sstep-sgd}.
We require $s \leq \tau$ because the local solution vectors $\tilde{\bm{x}}_k^{[i]}$ are averaged only every $\tau$ iterations.

\section{Algorithms Analysis}\label{sec:cost}
\Cref{tab:comp-costs,tab:comm-costs} summarize the theoretical costs of the parallel SGD variants studied.
We assume $\bm{A} \in \mathbb{R}^{m \times n}$ has $\bar{z}$ nonzeros per row uniformly distributed across $p$ processors, yielding $m\bar{z}/p$ nonzeros per rank and $b\bar{z}$ total nonzeros per mini-batch of size $b$.
$\bm{A}$ is stored in Compressed Sparse Row (CSR) layout.
Sub-sampling of rows is performed cyclically via $i = (i + b) \bmod m$ which permits the row-index array to be reconstructed cheaply, and we pad $\bm A$ so $m \equiv 0 \pmod{s_{\text{max}}\cdot b}$.

\subsection{Computation, Convergence, and Storage}
We model algorithm cost as $T = T_{\text{comp}} + T_{\text{comm}}$ with $T_{\text{comp}} = \gamma F$, where $F$ is the flop count and $\gamma$ is seconds per floating-point operation.
Convergence rates for SGD, $s$-step SGD, and FedAvg are well-known, and we cite them in \Cref{tab:comp-costs} and use them below as needed.
$s$-step SGD is an algebraic reformulation of \Cref{alg:sgd} \cite{devarakonda_avoiding_2020} and converges identically up to floating-point error.
FedAvg \cite{stich2018local} attains rate $1/(\tilde{K}bp)$ provided $\tau = O(\sqrt{\tilde{K}/(bp)})$.
Exceeding this bound at fixed $\tau$ as $p$ grows degrades convergence through approximation error.
This effect is pronounced on non-I.I.D. datasets.
We note that this convergence degradation further motivates HybridSGD since it allows for tunable $p_r$, where the column dimension $p_c$ can absorb additional parallelism without reducing the FedAvg convergence rate or reducing $\tau$.
\renewcommand*{\arraystretch}{1.75}
\begin{table}[t]
    \centering
    \footnotesize
        \caption{Theoretical flops, convergence rates, and storage costs of parallel SGD variants, FedAvg, $s$-step SGD, and HybridSGD.
    HybridSGD uses a 2D processor grid such that $p = p_r \times p_c$.
    We assume each row of $\bm{A}$ contains $\bar{z} > 0$ nonzeros, uniformly distributed so all partitionings yield load-balanced processors ($M = m\bar{z}/p$ nonzeros per rank).
    All costs are leading-order, and algorithms in \textbf{bold} are communication-efficient.}
    \label{tab:comp-costs}
    \begin{tabular}{cccc}
        \bf Algorithm & \bf Flops ($F$) & \bf Convergence rate & \bf Storage ($M$)\\\hline
        1D-row SGD & $K \cdot \bigl(\frac{b\bar{z}}{p} + n\bigr)$ & $1/(Kb)$ & $m\bar{z}/p + n$\\\hline
        1D-column SGD & $K \cdot \bigl(\frac{b\bar{z}}{p} + n/p\bigr)$ & $1/(Kb)$ & $m\bar{z}/p + b + n/p$\\\hline
        2D SGD & $K \cdot \bigl(\frac{b\bar{z}}{p} + n/p_c\bigr)$ & $1/(Kb)$& $m \bar{z} /p + b/p_r + n/p_c$\\\hline\hline
        \bf $s$-step SGD & $(K/s)\cdot \biggl( \frac{\bar{z}^2 \binom{s}{2} b^2}{n \cdot p} + \binom{s}{2} b^2 + n/p\biggr)$ & $1/(Kb)$ & $m \bar{z} /p + \binom{s}{2}b^2 + n/p$\\\hline
        \bf FedAvg & $\tilde{K} \cdot \tau \cdot \bigl(\frac{b\bar{z}}{p} + n \bigr)$ & $1/(\tilde{K} b p)$, if $\tau = O\biggl(\sqrt{\tilde{K}/(b p)}\biggr)$& $m \bar{z} /p + n$\\\hline
        \bf HybridSGD & $(\hat{K}/s) \cdot \biggl( \frac{\bar{z}^2 \binom{s}{2} b^2}{n \cdot p \cdot p_r} + \binom{s}{2} \frac{b^2}{p_r^2} + \tau \cdot n/p_c\biggr)$ & $1/(\hat{K} b p_r)$, if $\tau = O\biggl(\sqrt{\hat{K}/(b p_r)}\biggr)$ & $m \bar{z} /p + \binom{s}{2}\frac{b^2}{p_r^2} + n/p_c$\\\hline
    \end{tabular}
\end{table}
\begin{theorem}\label{thm:2dsgd-comp}
    $K$ iterations of SGD with $\bm{A}$ distributed across a 2D processor grid (2D SGD) of size $p = p_r \times p_c$ processor must perform $F = O\big(K \cdot (\frac{b\bar{z}}{p} + n)\big)$ flops and store $M = O(m\bar{z}/p + n)$ words in memory per processor.
\end{theorem}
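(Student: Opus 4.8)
The plan is to count, operation by operation, the local arithmetic and local memory required by one iteration of \Cref{alg:sgd} under the 2D layout of \Cref{fig:2d}, and then multiply the flop count by $K$. The communication (the sum-reductions needed to assemble the distributed vectors) is charged to the separate latency/bandwidth analysis, so here I track only local flops and per-processor storage. Throughout I use the load-balance hypothesis of \Cref{tab:comp-costs}: nonzeros are distributed uniformly enough that processor $(r,c)$, which owns the $(b/p_r)\times(n/p_c)$ subblock of the subsampled matrix $\bm{S}_k\diag(\bm{y})\bm{A}$, holds $bc/p$ of its nonzeros.

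First I would handle the forward product in \Cref{alg:sgd-uk}. Each processor applies its local sparse matrix-vector product, touching exactly its $bc/p$ nonzeros for $O(bc/p)$ flops; a column-wise reduction (not counted here) then assembles the $b/p_r$-length partial sums, after which the sigmoid is applied elementwise for $O(b/p_r)$ flops. Next I would handle the gradient in \Cref{alg:sgd-gk}: a local transposed sparse matvec costs another $O(bc/p)$ flops, a row-wise reduction (again communication) assembles the $n/p_c$-length gradient block, and the scaled update $\bm{x}_k=\bm{x}_{k-1}-\eta\bm{g}_k$ adds $O(n/p_c)$ flops. Summing gives a per-iteration local cost of $O(bc/p + b/p_r + n/p_c)$.

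To match the stated bound I would absorb the lower-order terms: $n/p_c \le n$, and the $O(b/p_r)$ sigmoid work is dominated by the matvec and update work in the natural regime (e.g.\ $p_c \le c$, so that $b/p_r \le bc/p$). This yields $O(bc/p + n)$ per iteration and hence $F=O\!\left(K\cdot(bc/p + n)\right)$ over $K$ iterations; retaining the tighter $n/p_c$ term recovers the entry reported in \Cref{tab:comp-costs}. For storage I would tally the single stored copy of $\bm{A}$ at $cm/p$ nonzeros per processor, plus the live vectors $\bm{u}_k$ ($b/p_r$ entries) and $\bm{g}_k,\bm{x}_k$ ($n/p_c$ entries each), yielding $M=O(cm/p + b/p_r + n/p_c)=O(cm/p + n)$.

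The one place requiring care is the claim that each local sparse matvec costs $bc/p$ flops to leading order. Because the $b$ rows are (cyclically) subsampled each iteration, the number of nonzeros landing in a given processor block is in principle data- and iteration-dependent, so the argument rests entirely on the uniform-distribution assumption that makes $bc/p$ the leading-order per-processor nonzero count; that same assumption is what keeps the matrix-storage term at $cm/p$ rather than a larger, imbalanced value. The remaining steps are routine vector-length accounting with reductions charged to communication, so I expect no further difficulty. I note that, unlike $s$-step SGD and HybridSGD, 2D SGD forms no Gram matrix, so no probabilistic sparse-SYRK bound is needed here.
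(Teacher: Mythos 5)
Your proposal is correct and follows essentially the same route as the paper's proof: an operation-by-operation flop count of \Cref{alg:sgd-uk,alg:sgd-gk} under the 2D layout (two local SpMVs at $bc/p$ flops each via the uniform nonzero-distribution assumption, $O(b/p_r)$ sigmoid work, $O(n/p_c)$ update work), with reductions charged to the communication analysis, followed by summing, multiplying by $K$, and dropping lower-order terms, and the same storage tally of $cm/p$ plus the $b/p_r$- and $n/p_c$-dimensional subvectors. If anything, you are slightly more explicit than the paper in justifying why the $b/p_r$ term is lower-order (via $p_c \le c$) and in noting that the per-iteration count with $n/p_c$ retained matches the tighter entry in \Cref{tab:comp-costs}.
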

\begin{proof}
    Distribute $\bm A$ across a 2D grid with $b$-dimensional quantities along $p_r$ and $n$-dimensional quantities along $p_c$.
    Forming $\bm{u}_k$ requires a parallel SpMV with $\bm{S}_k \cdot \diag(\bm{y}) \cdot \bm{A} \cdot \bm{x}_{k - 1}$.
    Constructing the CSR row-pointer of length $b/p_r$ is $O(b/p_r)$.
    Scaling and the SpMV each cost $b\bar{z}/p$ flops since each column team contributes $\bar z/p_c$ nonzeros per row in expectation.
    The nonlinear sigmoid step costs $\phi b/p_r$ for a constant $\phi > 1$ that accounts for $\exp$ and division.
    Forming $\bm g_k$ is a second SpMV at $b\bar{z}/p$ flops, and the solution update costs $2n/p_c$.
    Multiplying by $K$ iterations and dropping lower-order terms yields the stated bound.
    Storage follows because $\bm A$ contributes $m\bar{z}/p$ per rank uniformly, while per-row team buffers contribute $b/p_r$ and $n/p_c$ for $\bm u_k$ and $\bm g_k$.
\end{proof}
\begin{corollary}\label{cor:fedavg-hybrid-comp}
    Setting $p_c{=}1$ or $p_r{=}1$ in \Cref{thm:2dsgd-comp} recovers 1D-row and 1D-column SGD.
    $\tilde{K}$ iterations of FedAvg (1D-row, $\tau$ inner steps) perform $O(\tilde{K}\tau(b\bar{z}/p + n))$ flops with the same $O(m\bar{z}/p+n)$ storage.
    $\hat{K}$ iterations of HybridSGD substitute $b\to b/p_r$, $p \to p_c$ into the $s$-step SGD row of \Cref{tab:comp-costs}, performing $O((\hat{K}/s)(\bar{z}^2\binom{s}{2}b^2/(npp_r) + \binom{s}{2}b^2/p_r^2 + \tau n/p_c))$ flops and storing $O(m\bar{z}/p + \binom{s}{2}b^2/p_r^2 + n/p_c)$ words per processor.
\end{corollary}
The FedAvg and HybridSGD rows of \Cref{tab:comp-costs} follow from \Cref{cor:fedavg-hybrid-comp} by specializing the 2D-SGD bound.
1D/2D parallel SGD and FedAvg have comparable compute and storage, while $s$-step SGD and HybridSGD inflate both.
FedAvg beats parallel SGD when $\tilde{K} < K/\tau$, which follows from its linear-in-$p$ convergence speedup at bounded $\tau$.
$b$ is treated as a global batch size for cross-algorithm comparison.
In practice each algorithm tunes its own $b$, which we exploit in \Cref{sec:exp}.

\subsection{Communication}
\renewcommand*{\arraystretch}{1.25}
\begin{table}[t]
    \centering
    \small
        \caption{Parallel communication costs under Hockney's two-term ($\alpha$-$\beta$) model. HybridSGD uses $p = p_r \times p_c$. All quantities communicated are dense vectors, costs are leading-order, and \textbf{bold} entries are communication-efficient.}
    \label{tab:comm-costs}
    \begin{tabular}{ccc}
        \bf Algorithm & \bf Bandwidth ($W$) & \bf Latency ($L$)\\\hline
        1D-row SGD & $K \cdot b$ & $K \cdot \log{p}$\\\hline
        1D-column SGD & $K \cdot n$ & $K \cdot \log{p}$\\\hline
        2D SGD & $K \cdot (b/p_r + n/p_c)$ & $K \cdot (\log{p_r} + \log{p_c})$\\\hline\hline
        \bf $s$-step SGD & $(K/s)\cdot \binom{s}{2}b^2$ & $(K/s) \cdot \log{p}$\\\hline
        \bf FedAvg & $\tilde{K} \cdot n$ & $\tilde{K} \cdot \log{p}$ \\\hline
        \bf HybridSGD & $(\hat{K}/s) \cdot \binom{s}{2}b^2/p_r^2 + (\hat{K}/\tau) \cdot n/p_c$ & $(\hat{K}/\tau) \cdot \log{p_r} + (\hat{K}/s) \cdot \log{p_c}$\\\hline
    \end{tabular}
\end{table}
\renewcommand*{\arraystretch}{1.25}
We use Hockney's ($\alpha$-$\beta$) model, $T_{\text{comm}} = \alpha L + \beta W$, where $L$ is messages, $W$ is words moved, and $(\alpha, \beta)$ are hardware parameters.
We assume all algorithms use MPI Allreduce with the reduce-scatter + all-gather bound $L = 2\log p$, $W = d$ \cite{thakur05,rabenseifner_optimization_2004}, and that computation is performed redundantly on all ranks after reduction.
\Cref{tab:comm-costs} summarizes the resulting per-algorithm bounds.
Per-iteration bounds follow by dropping the $K, \hat K, \tilde K$ factors.
We prove 2D SGD below and specialize to 1D variants.
The $s$-step SGD bound is from \cite{devarakonda_avoiding_2020}.
\begin{theorem}\label{thm:2dsgd-comm}
    $K$ iterations of SGD with $\bm{A}$ distributed across a 2D processor grid (2D SGD) of size $p = p_r \times p_c$ processors must communicate $W = O\big(K b/p_r + K n/p_c\big)$ words using $L = O\big(K \log p_r + K \log p_c\big)$ messages.
\end{theorem}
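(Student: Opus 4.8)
The plan is to analyze \Cref{alg:sgd} under the 2D processor grid layout used in the proof of \Cref{thm:2dsgd-comp}, identify the two communication events that occur per iteration, and cost each using the allreduce model stated above ($L = 2\log p$ messages and $W = d$ words for a message of $d$ words). Under this layout the local block $\bm{A}^{[i,j]}$ has dimension $(b/p_r) \times (n/p_c)$, the $n$-dimensional iterate $\bm{x}_{k-1}$ is distributed across the $p_c$ column dimension, and the $b$-dimensional vector $\bm{u}_k$ is distributed across the $p_r$ row dimension.

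First I would cost the formation of $\bm{u}_k$ in \Cref{alg:sgd-uk}. Each processor $(i,j)$ computes the local product $\bm{A}^{[i,j]} \bm{x}_{k-1}^{[j]}$, yielding a $b/p_r$-dimensional partial contribution. To assemble the $b/p_r$-dimensional block of $\bm{S}_k \diag(\bm{y}) \bm{A} \bm{x}_{k-1}$ owned by row team $i$, these partials must be summed over the $p_c$ processors sharing that row team. A single allreduce along the $p_c$ column dimension accomplishes this at cost $W_u = b/p_r$ words and $L_u = 2\log p_c$ messages; the subsequent nonlinear sigmoid step is applied redundantly and incurs no communication.

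Next I would cost the formation of $\bm{g}_k$ in \Cref{alg:sgd-gk}. Each processor $(i,j)$ computes the transposed local product $(\bm{A}^{[i,j]})^\intercal \bm{u}_k^{[i]}$, yielding an $n/p_c$-dimensional partial contribution. To assemble the $n/p_c$-dimensional block of the gradient owned by column team $j$, these partials must be summed over the $p_r$ processors sharing that column team, requiring an allreduce along the $p_r$ row dimension at cost $W_g = n/p_c$ words and $L_g = 2\log p_r$ messages. The closing iterate update $\bm{x}_k = \bm{x}_{k-1} - \eta \cdot \bm{g}_k$ is elementwise within each column team and requires no communication.

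Summing the two events gives a per-iteration cost of $W = b/p_r + n/p_c$ words and $L = 2(\log p_r + \log p_c)$ messages; multiplying by the $K$ outer iterations and absorbing the constant into the big-$O$ yields the stated bounds $W = O(Kb/p_r + Kn/p_c)$ and $L = O(K\log p_r + K\log p_c)$. The 1D specializations then follow by inspection: setting $p_c = 1$ collapses the $\bm{u}_k$ allreduce to a no-op and recovers 1D-row SGD, while setting $p_r = 1$ collapses the $\bm{g}_k$ allreduce and recovers 1D-column SGD. The one step that needs care is fixing the \emph{direction} of each reduction---column teams for $\bm{u}_k$ and row teams for $\bm{g}_k$---which is forced by the observation that the two matrix-vector products contract along complementary axes of the 2D block layout; once that mapping is pinned down, both the bandwidth and latency drop out directly from the reduce-scatter/allgather allreduce bounds.
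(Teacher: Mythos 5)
Your proposal is correct and follows essentially the same route as the paper's proof: identify the two allreduce events in \Cref{alg:sgd-uk,alg:sgd-gk}, charge $b/p_r$ words and $O(\log p_c)$ messages to forming $\bm{u}_k$ and $n/p_c$ words and $O(\log p_r)$ messages to forming $\bm{g}_k$, note that the iterate update is communication-free because $\bm{x}_k$ is stored redundantly within column teams, and multiply by $K$. The paper additionally remarks that sampling coordination is handled by sharing RNG seeds along the row dimension, but your more explicit justification of the reduction directions (contraction along complementary axes of the block layout) covers the same content.
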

\begin{proof}
    Allreduce is required only in \Cref{alg:sgd-uk,alg:sgd-gk}.
    Sampling is coordinated by seeding all row-team ranks identically.
    $\bm g_k$ and $\bm x_k$ are stored redundantly on column-team ranks ($n/p_c$ per rank), and their update is local.
    Forming $\bm u_k$ Allreduces a $b/p_r$-vector along each row ($b/p_r$ words, $\log p_c$ messages).
    Forming $\bm g_k$ Allreduces an $n/p_c$-vector along each column ($n/p_c$ words, $\log p_r$ messages).
    Multiplying by $K$ and summing yields the bound.
    Setting $p_c = 1$ or $p_r = 1$ recovers the 1D-row and 1D-column rows of \Cref{tab:comm-costs}.
\end{proof}
\begin{corollary}\label{cor:fedavg-hybrid-comm}
    FedAvg (1D-row, $\tau$ inner steps) communicates $W = O(\tilde{K} n)$ words in $L = O(\tilde{K}\log p)$ messages through one length-$n$ Allreduce per outer iter.
    Substituting $b\to b/p_r$, $p \to p_c$ into the $s$-step SGD row of \Cref{tab:comm-costs} gives HybridSGD's row-Allreduce cost ($\binom{s}{2}b^2/p_r^2$ words, $\log p_c$ messages per outer iter, $\hat{K}/s$ Allreduces).
    The FedAvg-style column Allreduce on $p_r$ ranks of $n/p_c$ words occurs $\hat{K}/\tau$ times.
    Summing yields $W = O(\hat{K}/s \cdot \binom{s}{2}b^2/p_r^2 + \hat{K}/\tau \cdot n/p_c)$ and $L = O(\hat{K}/\tau \cdot \log p_r + \hat{K}/s \cdot \log p_c)$.
    If FedAvg and 1D-row SGD converge at the same rate then $K = \tilde{K}p$, since FedAvg attains linear convergence speedup as $p$ grows at bounded $\tau$.
\end{corollary}

The FedAvg and HybridSGD rows of \Cref{tab:comm-costs} follow from \Cref{cor:fedavg-hybrid-comm} by the same specialization.
$s$-step SGD trades compute, bandwidth, and latency as a function of $s$.
FedAvg trades convergence against performance as a function of $p$ and $\tau$.
\Cref{tab:comp-costs} shows that HybridSGD's 2D grid interpolates between the two extremes, with compute and bandwidth relative to $s$-step SGD shrinking by $p_r^2$ at the cost of a convergence rate that is faster than FedAvg whenever $p_c > 1$.
In particular, $p_r$ selects a convergence curve and $p_c$ scales parallelism along it without further convergence cost.
The additional bandwidth from communicating across $p_c$ is mitigated by tuning $s$, $b$, $p_r$, and $p_c$, while $n$ is fixed by $\bm A$.

\section{$\alpha$-$\beta$-$\gamma$ Cost Model and 2D-Mesh Optimum}
\label{sec:cost-model}

This section derives a closed-form runtime model $T(p_r, p_c, s, b, \tau, \alpha,\beta,\gamma)$ for HybridSGD that unifies the bounds of \Cref{sec:cost}.
We use the model primarily as a \emph{ranking and selection} tool: its role is to order candidate $(p_r, p_c, s, b, \tau, \text{partitioner})$ configurations and identify the operating regime, not to predict wall-clock time to high accuracy.
As we show in \Cref{sec:refinements}, the model's absolute-runtime error can reach $2$ to $10\times$, but its \emph{ranking} of configurations is reliable, and ranking is the only property the selection rules below depend on.
Minimizing $T$ over the possible mesh dimensions $p = p_r \times p_c$ yields a continuous family of optima that interpolates between pure 1D $s$-step SGD ($p_r{=}1$, $p_c{=}p$) and pure FedAvg ($p_r{=}p$, $p_c{=}1$).

\subsection{Machine model and per-solver costs}
\label{sec:cost-model:machine}

We use Hockney's two-term model, where one Allreduce over $q$ ranks carrying a payload of $W$ words costs $T_{\text{comm}}(q, W) = 2\lceil\log_2 q\rceil\alpha + W\beta$, with the $2\log_2 q$ factor following from bandwidth-optimal reduce-scatter + all-gather \cite{thakur05,rabenseifner_optimization_2004}.
Computation is $T_{\text{comp}} = \gamma F$.
Machine parameters $(\alpha, \beta, \gamma)$ are empirically measured on the NERSC Cray EX system used in the experiments in \Cref{sec:calibration}, and $w$ is the word size in bytes.
\Cref{tab:persample} reports per-sample $(\alpha, \beta, \gamma)$ costs for all six solvers, amortized over each solver's communication period.

\renewcommand*{\arraystretch}{1.45}
\begin{table}[t]
  \centering
  \small
    \caption{Per-sample $\alpha$-$\beta$-$\gamma$ costs amortized over each solver's communication period.
  $b$ is the per-row-team mini-batch size (the same quantity as in the 1D rows).
  HybridSGD row teams of $p_c$ ranks run 1D $s$-step SGD via a row Allreduce $\tau$ times per full round.
  The Gram payload is $\binom{s}{2}b^2$ entries per row team, independent of $p_r$.
  One column Allreduce over $p_r$ ranks then averages $n/p_c$ weight components per rank.}
  \label{tab:persample}
  \begin{tabular}{lccc}
    \textbf{Solver}
      & \textbf{Latency / sample}
      & \textbf{Bandwidth / sample}
      & \textbf{Compute / sample} \\\hline
    SGD
      & $2\log p\cdot\alpha$
      & $w\beta$
      & $4\bar{z}\gamma$ \\[2pt]
    MB-SGD
      & $\dfrac{2\log p\cdot\alpha}{b}$
      & $w\beta$
      & $\left(4\bar{z} + \dfrac{2n}{b}\right)\!\gamma$ \\[6pt]
    FedAvg
      & $\dfrac{2\log p\cdot\alpha}{\tau b}$
      & $\dfrac{n\,w\beta}{\tau b}$
      & $\left(4\bar{z} + \dfrac{2n}{b}\right)\!\gamma$ \\[6pt]
    $s$-step SGD
      & $\dfrac{2\log p\cdot\alpha}{s}$
      & $\dfrac{s+1}{2}\,w\beta$
      & $(6\bar{z} + 2s)\gamma$ \\[6pt]
    1D $s$-step SGD
      & $\dfrac{2\log p\cdot\alpha}{sb}$
      & $\dfrac{(s-1)b}{2}\,w\beta$
      & $(6\bar{z} + 2sb)\gamma$ \\[6pt]
    \textbf{HybridSGD}
      & $\dfrac{2\alpha(\tau\log p_c + \log p_r)}{sb\tau}$
      & $\left(\dfrac{(s-1)b}{2} + \dfrac{n}{sb\tau p_c}\right)\!w\beta$
      & $(6\bar{z} + 2sb)\gamma$ \\
  \end{tabular}
\end{table}
\renewcommand*{\arraystretch}{1.0}

\subsection{Closed-form runtime model}
\label{sec:cost-model:model}

Each rank holds $m/p_r$ local rows and $n/p_c$ local columns of $\bm{A}$.
One \emph{full round} is $\tau$ consecutive 1D $s$-step SGD bundles on each row communicator ($sb\tau/p_r$ row-team samples) followed by one column Allreduce.
With the $p_r$ row teams running in parallel, the per-epoch round count is $m/(sb\tau)$.
The per-rank compute contribution to the per-epoch wall is $(m/p)\cdot(6\bar{z} + 2sb)\gamma$ --- the total floating-point work $m(6\bar{z} + 2sb)$ is shared across $p$ ranks --- while each per-rank communication contribution is $m$ times the corresponding per-sample Allreduce wall, since Allreduce runtime is borne by every participating rank and is not divided across them.
Summing yields
\begin{empheq}[box=\fbox]{align}
  T(p_r, p_c, s, b, \tau) &= \underbrace{\tfrac{m}{p}(6\bar{z} + 2sb)\,\gamma}_{\text{compute}} + m\Biggl[ \underbrace{\frac{2\alpha(\tau\log p_c + \log p_r)}{sb\tau}}_{\text{latency}} + \underbrace{\frac{(s-1)b}{2}\,w\beta}_{\text{Gram BW}} + \underbrace{\frac{n\,w\beta}{sb\tau p_c}}_{\text{sync BW}} \Biggr],
  \label{eq:model}
\end{empheq}
with $p = p_r p_c$.
Compute grows linearly with $sb$, dominated by the $O(s^2b^2)$ correction loop at large $sb$.
Latency decreases as $sb\tau$ grows.
Gram bandwidth tracks the $\binom{s}{2}b^2$ entries of the block Gram message per row team.
Here $b$ is the per-row-team batch size (identical to the 1D $s$-step SGD row), so the Gram payload is independent of $p_r$.
Sync bandwidth decreases as $sb\tau p_c$ grows since only $n/p_c$ words are Allreduced per column sync.

\paragraph{Baselines as limits}
At $p_r{=}1, p_c{=}p, \tau \to \infty$ the column Allreduce vanishes and \eqref{eq:model} reduces to $\frac{m}{p}(6\bar{z}+2sb)\gamma + m\bigl[\frac{2\alpha\log p}{sb} + \frac{(s-1)bw\beta}{2}\bigr]$, the pure 1D $s$-step SGD cost.
At $p_r{=}p, p_c{=}1, s{=}1$ the row Allreduce vanishes and it reduces to $\frac{m}{p}(6\bar{z}+2b)\gamma + m\bigl[\frac{2\alpha\log p}{b\tau} + \frac{nw\beta}{b\tau}\bigr]$, the pure FedAvg cost.
MB-SGD follows from $\tau{=}1$ in the FedAvg limit.
These limits confirm that \eqref{eq:model} subsumes the solver hierarchy of \Cref{tab:persample} and that HybridSGD is a strict generalization.

\subsection{Optimal parameters}
\label{sec:cost-model:opt}

Write $\widetilde{L} = \tau\log p_c + \log p_r$.
Both the latency and sync-BW terms decrease strictly in $\tau$, so the communication-optimal $\tau$ is unbounded.
In practice, convergence drift from local weight staleness sets a dataset- and step-size-dependent upper bound \cite{stich2018local,karimireddy2020scaffold}.
Collecting terms in $s$ at fixed $b, \tau, p_r, p_c$ yields a convex $A_s s + B_s/s + C_s$ with $A_s = 2\gamma b/p + b w\beta/2$ and $B_s = 2\alpha\widetilde{L}/(b\tau) + nw\beta/(b\tau p_c)$, minimized at
\begin{equation}\label{eq:sstar}
  s^* = \sqrt{B_s / A_s}
      = \sqrt{\tfrac{2\alpha\widetilde{L}/(b\tau) + nw\beta/(b\tau p_c)}
                   {(2\gamma/p + w\beta/2)\,b}}.
\end{equation}
The analogous derivation in $b$ gives
\begin{equation}\label{eq:bstar}
  b^* = \sqrt{\tfrac{2\alpha\widetilde L/\tau + nw\beta/(\tau p_c)}
                    {(2\gamma s/p + (s{-}1)w\beta/2)\,s}}.
\end{equation}
One step of fixed-point iteration on \eqref{eq:sstar} and \eqref{eq:bstar} yields the joint $(s^*, b^*)$.
Equating the Gram-BW and sync-BW terms gives the bandwidth balance $(s{-}1)sb^2\tau p_c \approx 2n$.
With $(p_r, p_c)$ pinned by the topology rule (\Cref{eq:meshrule}), the balance controls only the choice of $s, b, \tau$.
Above the balance the Gram message dominates and $s$ or $b$ should shrink.
Below the balance the weight sync dominates and $\tau$ should grow.

\paragraph{Optimal mesh split}
With $p_c = p/p_r$ and $(s, b, \tau)$ fixed, $\partial T/\partial p_r = 0$ produces a transcendental fixed-point equation in $p_r$ whose coefficients depend on $\beta_{\text{row}}(p_c)$ (the row-Allreduce bandwidth over $p_c$ ranks) and $\beta_{\text{col}}(p_r)$ (the column-Allreduce bandwidth over $p_r$ ranks) through the rank-aware Hockney bandwidth.
The two coupled bandwidths in turn depend on the mesh split itself, so $p_r^{*}$ has no closed form in general.
A simple structural observation removes the need to solve the fixed point.
The calibrated $\beta(q)$ on Perlmutter CPU (\Cref{tab:calibration}) is approximately a step function in the per-Allreduce rank count $q$.
Intra-node values ($q \leq R$, where $R$ is the per-node rank count) sit in $[5{\times}10^{-11},\,2.7{\times}10^{-9}]$ s/B, while inter-node values ($q > R$) jump to $[2.7{\times}10^{-9},\,6.6{\times}10^{-9}]$ s/B, an order-of-magnitude discontinuity at $q = R$.
Holding the row team to one node ($p_c \leq R$) therefore keeps the frequent row Allreduce on shared-memory transport, while crossing the node boundary ($p_c > R$) both inflates the Gram-BW term by the $\sim 5{\times}$ $\beta$ step and crosses into the slow inter-node regime.
Sliding $p_c$ upward along the constraint $p_r p_c = p$ reduces the sync-BW term ($\propto n/p_c$) monotonically inside the intra-node piece $p_c \leq R$.
The kink at $p_c = R$ then sets the optimum:
\begin{equation}\label{eq:meshrule}
  p_c^{*} = \max\Bigl(\bigl\lceil n w / L_{\text{cap}} \bigr\rceil,\; \min(R,\,p)\Bigr),
  \qquad p_r^{*} = p / p_c^{*},
\end{equation}
where the cache term $\lceil n w / L_{\text{cap}} \rceil$ raises $p_c^{*}$ above $R$ only when the per-rank weight slab $n w / p_c$ would spill the chosen cache level $L_{\text{cap}}$ at $p_c = R$.
\eqref{eq:meshrule} requires only the two machine constants $(R, L_{\text{cap}})$ and the dataset's $n w$, with no $\alpha$-$\beta$-$\gamma$ calibration.
On Perlmutter CPU $R = 64$ and $L_{\text{cap}} = 1$\,MB (L2 per core on AMD EPYC 7763).
The cache term is non-binding on every LIBSVM dataset we measure ($n w \leq R \cdot L_{\text{cap}} = 64$\,MB), so the rule reduces to $p_c^{*} = \min(R, p)$ in this benchmark suite.
\Cref{tab:mesh-rule} confirms that \eqref{eq:meshrule} predicts the empirical winner on news20, rcv1, and uniform-density synthetic data ($m{=}2^{21}, n{=}3.15\text{M}, \rho{=}0.004, \kappa{=}1$) used as a controlled per-mesh sweep at $p{=}128$ to isolate the topology effect from column skew.
On url the rule's prediction is the immediate-neighbor mesh of the empirical winner and ties within $9\%$ on per-iteration runtime.
The cost model \eqref{eq:model} is then used at the selected mesh to rank candidate $s, b, \tau$ settings and locate the operating regime (\Cref{tab:regimes}); we use it to order configurations rather than to predict absolute wall-clock time, for which it is accurate only up to a constant per-call overhead (\Cref{sec:refinements}).

\begin{table}[t]
  \caption{Topology-respecting mesh rule \eqref{eq:meshrule} versus the empirical time-to-target best mesh on Perlmutter CPU at $R = 64$.
  The cache term is non-binding ($n w < R \cdot L_{\text{cap}} = 64$\,MB) on every entry, so the rule reduces to $p_c^{*} = \min(R, p)$.
  On url the rule's $(4, 64)$ is the immediate-neighbor mesh of the empirical winner $(8, 32)$ and ties within $9\%$ on per-iteration runtime (\Cref{sec:mesh-tuning}).}
  \label{tab:mesh-rule}
  \centering
  \small
  \begin{tabular}{lrrll}
    \hline
    Dataset & $p$ & $n w$ & Rule's $(p_r^{*}, p_c^{*})$ & Empirical time-to-target best \\\hline
    url     & 256 & 25.8 MB & $(4,\,64)$ & $(8,\,32)$ \\
    synthetic & 128 & 25.2 MB & $(2,\,64)$ & $(2,\,64)$ \\
    news20  & 64  & 10.8 MB & $(1,\,64)$ & $(1,\,64)$ \\
    rcv1    & 16  & 0.38 MB & $(1,\,16)$ & $(1,\,16)$ \\
    \hline
  \end{tabular}
\end{table}
Within the FedAvg convergence regime where Stich's bound $\tau = O(\sqrt{\tilde K/(bp)})$ holds, direct measurement at $\eta \in [0.005, 0.02]$ shows FedAvg and HybridSGD converge to within $5\%$ of the same loss given enough iterations (\Cref{sec:t-to-target}), so time-to-accuracy speedups arise from the winning mesh and partitioner reducing per-iteration runtime by $1$ to $2$ orders of magnitude on column-skewed data.
Outside that regime (large $p$ with $\tau b p$ exceeding Stich's bound for the iteration budget), FedAvg's local-update drift inflates iterations-to-loss and HybridSGD compounds its per-iteration win with a sample-efficiency win, as observed on url at $p{=}256$ (\Cref{sec:t-to-target}).
Load imbalance ($\kappa$, \Cref{sec:refinements}) further biases the optimum toward cache-friendly partitioners.

\subsection{Regime analysis}
\label{sec:cost-model:regime}

\Cref{tab:regimes} summarizes the four operating regimes exposed by \eqref{eq:model}.

\begin{table}[t]
  \centering
  \small
  \renewcommand*{\arraystretch}{1.3}
    \caption{Operating regimes of HybridSGD under \eqref{eq:model}.
  Perlmutter CPU nodes lie in the latency-to-Gram-BW transition at $n \geq 10^5$, $p \geq 64$ (\Cref{sec:calibration}).}
  \label{tab:regimes}
  \begin{tabular}{llll}
    \textbf{Regime}
      & \textbf{Condition}
      & \textbf{Dominant term}
      & \textbf{Optimal action} \\\hline
    Compute-bound
      & $\gamma\bar{z}\,sb\tau \gg p\,\alpha\log p$
      & $(6\bar{z}+2sb)\gamma/p$
      & Increase $p$, $s,b$ secondary \\
    Latency-bound
      & $\alpha\log p\cdot p_c \gg n w\beta$
      & latency
      & Maximize $sb\tau$, prefer large $s$, $b$ \\
    Gram-BW-bound
      & $(s-1)sb^2\tau p_c \gg 2n$
      & $(s-1)b\,w\beta/2$
      & Decrease $s$ or $b$, use FedAvg \\
    Sync-BW-bound
      & $(s-1)sb^2\tau p_c \ll 2n$
      & $n\,w\beta/(sb\tau p_c)$
      & Increase $\tau$ or $p_c$ \\
  \end{tabular}
\end{table}
\renewcommand*{\arraystretch}{1.0}
In the latency-bound regime $s^*b^*$ grows as $(\alpha/\beta)^{1/2}$ and the topology rule \eqref{eq:meshrule} pins the mesh away from both 1D corners.
The interior $p_r$ predicted by the rule lands in $[2, 16]$ for LIBSVM datasets at moderate $p$.
In the Gram-BW-bound regime $s$ shrinks toward $1$ and the model reduces to FedAvg.
In the sync-BW-bound regime, increasing $p_c$ distributes the weight across more ranks and cuts each column-Allreduce payload from $n$ to $n/p_c$.
This is the principal communication benefit of $p_c > 1$ over pure FedAvg.
The CA overhead of $2sb$ extra FLOPs/sample is beneficial when $\alpha\log p_c / \gamma > s^2 b^2$.
On Perlmutter $\alpha/\gamma \approx 10^6$ to $10^8$ (\Cref{tab:calibration}), so the inequality holds for all $s \leq 32$, $b \leq 64$, $p_c \geq 2$.

\subsection{Empirical refinements}
\label{sec:refinements}

The leading-order $(\alpha, \beta, \gamma)$ model of \Cref{tab:persample} captures regime direction correctly on every LIBSVM dataset, but quantitatively over/underestimates by $2$ to $10\times$.
We present several empirical refinements to the model, which address irregularities in hardware and data.
\paragraph{Cache-aware compute}  The $(4\bar{z} + 2n/b)\gamma$
compute term assumes the weight vector $\bm x \in \mathbb{R}^n$ is fetched from DRAM at every mini-batch.
In practice the $\tau$ inner SGD steps between Allreduces share a single weight vector that stays cache-resident after the first access.
We replace the per-outer-iter weight-access cost with $T_{\text{weights}} = nw[\gamma_{\text{DRAM}} + (\tau{-}1)\gamma_{\text{cache}}(nw)]$, where $\gamma_{\text{cache}}$ is a step function over the cache hierarchy.
Empirical values from \texttt{cblas\_ddot} on Perlmutter EPYC 7763 are $\gamma_{L1} \approx 4{\times}10^{-12}$, $\gamma_{L2} \approx 1.25{\times}10^{-11}$, $\gamma_{L3} \approx 1.5{\times}10^{-11}$, and $\gamma_{\text{DRAM}} \approx 2.6{\times}10^{-11}$ s/byte.
HybridSGD's local weights $\bm x \in \mathbb{R}^{n/p_c}$ may fall in a tighter cache level than FedAvg's full-$n$ vector, giving HybridSGD a cache-locality advantage on large-$n$ datasets.

\paragraph{Rank-aware $\beta$ and load imbalance}  $\beta(p)$ on
Perlmutter spans $5.3{\times}10^{-11}$ (intra-node $p{=}1$, shared-memory) to $6.6{\times}10^{-9}$ s/B ($p{=}16384$, inter-node Slingshot).
This is a $125\times$ swing dominated by shared-memory bandwidth contention below $p \leq 64$ and NIC saturation above.
Substituting the rank-appropriate $\beta(p)$ improves single-node predictions $2$ to $5\times$.
Load imbalance from heavy-tailed nnz distributions multiplies the sparse-compute term by $\kappa = \max_p(\text{nnz})/\overline{\text{nnz}}$.
Measured $\kappa$ ranges from $1.0$ (epsilon) through $1.9$ (news20 at $p{=}64$, 1D row) to $\mathbf{482}$ (url at $p_r{=}4,\,p_c{=}1024$, 2D row+col with column skew dominating).

\paragraph{Cache-aware partitioning}
Given sparse matrix inputs, the simplest approach for SGD-based methods would be to simply balance nnzs to ensure equal workload in the SpMV computations.
While this is sufficient on moderate-skew data (news20, where it yields a measured $2.3\times$ HybridSGD speedup), but can be harmful on extreme column-skew data (i.e., heavy-tailed distributions).
Measurement on url at $p_c{=}64$ shows three regimes.
Row partitioner gives $n_{\text{local}}{=}n/p_c{=}50{,}499$ cols/rank (cache-friendly, fits L2) with $\kappa{=}33.8$ (nnz-imbalanced).
The nnz-balanced partitioner gives $n_{\text{local}} \in [1, 1{,}409{,}992]$ per rank, where $\kappa{=}1.3$ on nnz but the rank holding $1.4$M columns has $11.2$\,MB of weights, spilling out of L2 ($1$\,MB/core) into L3 or DRAM and degrading per-iteration runtime by $2.4\times$.
Cyclic partitioner gives $n_{\text{local}}{=}n/p_c$ exactly with $\kappa{=}1.9$ (near-optimal) at the cost of a column permutation in the reader.
The right partitioner depends on the column-skew structure and the target cache hierarchy.
It is a two-objective constrained problem, $\min_P \kappa(P)$ s.t.\ $\max_p n_{\text{local}}^p(P)\,w \leq L_{\text{cap}}$.
Naive nnz-balanced partitioning satisfies neither.
Cyclic partitioning satisfies both but pays in read complexity.
The cost model predicts, and measurement confirms (\Cref{sec:partitioning}), that no single partitioner dominates across the LIBSVM suite.

\paragraph{Sync-skew term}  Per-phase timing on url HybridSGD at
$p_r{=}4, p_c{=}64$ across the three partitioners (\Cref{tab:timing-breakdown}) shows that the dominant cost of poor partitioning manifests as \emph{sync-skew waiting time inside the row-team Allreduce}, not as compute time on the slowest rank.
The $s$-step SGD comm timer (row Allreduce of the $sb$ residual) measures $477\,\mu$s under rows partitioner ($\kappa{=}34$) versus $142\,\mu$s under cyclic partitioner ($\kappa{=}1.9$).
The $\sim$335\,$\mu$s gap is wait-for-slowest-rank time, not MPI bandwidth or latency cost, since the payload is $\sim$1\,KB in both cases.
We capture this with $T_{\text{sync skew}} \approx (\kappa_{\text{local}} - 1) \cdot T_{\text{compute,avg}}$, applied to the row Allreduce.
The term is zero for $\kappa{=}1$ and grows linearly with imbalance.

\paragraph{Validation}  We validate the model against the property we actually rely on --- \emph{ranking fidelity}, i.e.\ whether it orders configurations correctly --- and treat absolute-runtime accuracy as a secondary, weaker check.
The model gets the \emph{regime direction} and partitioner ranking correct on all $9$ (dataset, partitioner) cells measured (\Cref{fig:predicted-vs-measured}).
On url and news20 the predicted ranking is cyclic $<$ rows $<$ nnz (cache spill on the latter), matching observation, and on rcv1 all three are tied within $5\%$ both predicted and measured.
This ranking fidelity is the property required to use the formalism for mesh and partitioner selection on new datasets.
Absolute runtime is predicted less accurately: the combined refinements match rcv1 within $1\%$ across all three partitioners (predicted/measured ratio $0.94$ to $0.99$), but the ratio falls to $0.77$ on news20 cyclic and $0.34$ on url cyclic, with the gap growing in $n/p_c$.
At url's $n/p_c{=}50{,}499$ columns per rank and $\bar{z}/p_c{\approx}2$ nonzeros per row, the per-call overhead of MKL's \texttt{mkl\_sparse\_syrkd} inspector (which scans the full column-index array) and the transpose SpMV scatter into the $n/p_c$-length gradient vector dominate actual floating-point work.
Standalone microbenchmarking confirms that \texttt{syrkd} has a ${\sim}10\,\mu$s floor at $n/p_c{=}50$K regardless of nonzero count, and the transpose SpMV adds ${\sim}9\,\mu$s, both scaling linearly with $n/p_c$.
On rcv1 ($n/p_c{=}2{,}952$) these overheads drop below $3\,\mu$s and the flop-based $\gamma$ term is accurate.
The model does not capture this per-call overhead because it is proportional to the column dimension, not to the flop count; a piecewise $\max(\text{flop cost},\, c \cdot n/p_c)$ compute term would close the gap but requires two additional calibration constants.
Crucially, because we use the model only to rank configurations, this absolute-runtime gap does not affect the selection decisions it drives.

\section{Experiments}\label{sec:exp}
\begin{table}
    \centering
     \caption{LIBSVM binary-classification datasets used in the evaluation, ranked by increasing $\bar{z}$.
    Together they span the regimes identified by \Cref{tab:regimes}.
    The rcv1 and news20 datasets are sparse with moderate-to-extreme column skew, url is the largest column-skewed sparse dataset, and epsilon is dense.}
    \label{tab:datasets}
    \begin{tabular}{lrrrr}
      \hline
      Name & $m$ & $n$ & $\bar{z}$ & Sparsity (\%) \\
      \hline
      rcv1    & $20{,}242$      & $47{,}236$      & $74$        & $99.85$ \\
      news20  & $19{,}996$      & $1{,}355{,}191$ & $455$       & $99.97$ \\
      url     & $2{,}396{,}130$ & $3{,}231{,}961$ & $116$       & $99.99$ \\
      epsilon & $400{,}000$     & $2{,}000$       & $\sim 2000$ & $0$     \\
      \hline
    \end{tabular}
\end{table}
We evaluate on the four LIBSVM datasets summarized in \Cref{tab:datasets}.
We implement the SGD variants of \Cref{sec:design} in C++ with MPI \cite{gropp_using_2014} (Cray MPICH 9.0.1) and Intel oneAPI MKL \cite{noauthor_intel_nodate} (2025.3) for dense and sparse BLAS, in particular \texttt{mkl\_sparse\_d\_mv} for SpMV and \texttt{mkl\_sparse\_syrkd} for the $s$-step Gram computation.
All experiments use the NERSC Cray EX (Perlmutter) CPU partition \cite{noauthor_architecture_nodate} at $64$ MPI ranks per node $\times$ $2$ cores per task (one rank per physical core, no SMT).
We performed offline experiments to determine this setting.
We also performed experiments using a mixed OpenMP+MPI hybrid parallelism model, but we did not observe improvements over a flat MPI model.
So we use flat-MPI throughout our experiments.
$\bm{A}$ is stored in three-array CSR format and a new MKL sparse handle is created per batch of $b$ rows.
All experiments use FP64 to match the $\bar{z}$-dependent conditioning of the $s$-step Gram matrix, which was unstable at FP32 on news20 ($\bar{z}{=}455$).
Memory is aligned to $64$-byte boundaries and experiments are timed using \texttt{std::chrono}.

\subsection{Measured $\alpha$, $\beta$, and $\gamma$}
\label{sec:calibration}
The cost model in \Cref{sec:cost-model} is parameterized by $\alpha$ (s/message), $\beta$ (s/byte), and $\gamma$ (s/byte).
We measure these quantities empirically on Perlmutter CPU nodes using microbenchmarks (2$\times$AMD EPYC 7763, Slingshot-11, $64$ ranks/node, one rank per physical core).
$\alpha$ and $\beta$ come from \texttt{MPI\_Allreduce} sweeps at $N \in \{1,\dots,256\}$ nodes (inter-node) and at $p \in \{1,\dots,64\}$ ranks single-node (intra-node), fitting $T = 2\lceil\log_2 p\rceil\alpha + W\beta$ over $W \in \{2^{10},\dots,2^{26}\}$\, Bytes.
$\gamma$ is measured via single-thread \texttt{cblas\_ddot} with increasing input sizes to exercise cache effects.
The cache-aware refinement (\Cref{sec:refinements}) used the parametrized function $\gamma(W)$ to capture the varying nature of performance due to caching.
The rank-aware refinement selects $\beta(p)$ based on whether the Allreduce is intra- or inter-node.

\begin{table}[t]
  \centering
  \caption{Measured hardware parameters on Perlmutter CPU.
  Top rows report $\alpha,\beta$ for \texttt{MPI\_Allreduce} (MPI\_SUM, MPI\_DOUBLE).
  Middle rows report intra-node $\alpha$ and $\beta$ on a single node (shared-memory MPI communication).
  $\alpha$ is the total 8-byte Allreduce time.
  Bottom rows report per-byte memory cost $\gamma$ from \texttt{cblas\_ddot}.}
  \label{tab:calibration}
  \small
  \begin{tabular}{lrrr}
    \hline
    \multicolumn{4}{l}{\textbf{Inter-node Allreduce}} \\
    Nodes & Ranks & $\alpha$ ($\mu$s) & $\beta$ (s/B) \\
    \hline
    $1$    & $64$    & $3.64$  & $2.66\times 10^{-9}$ \\
    $2$    & $128$   & $8.36$  & $3.14\times 10^{-9}$ \\
    $4$    & $256$   & $12.56$ & $3.33\times 10^{-9}$ \\
    $8$    & $512$   & $14.46$ & $3.73\times 10^{-9}$ \\
    $16$   & $1024$  & $23.23$ & $4.14\times 10^{-9}$ \\
    $32$   & $2048$  & $43.22$ & $5.15\times 10^{-9}$ \\
    $64$   & $4096$  & $92.71$ & $5.37\times 10^{-9}$ \\
    $128$  & $8192$  & $57.13$ & $6.10\times 10^{-9}$ \\
    $256$  & $16384$ & $84.92$ & $6.65\times 10^{-9}$ \\
    \hline
    \multicolumn{4}{l}{\textbf{Intra-node Allreduce} (single node, $1$ to $64$ ranks)} \\
    Ranks & & $\alpha$ ($\mu$s) & $\beta$ (s/B) \\
    \hline
    $1$  & & ---              & $5.34\times 10^{-11}$ \\
    $8$  & & $3.41$           & $5.90\times 10^{-10}$ \\
    $32$ & & $3.39$           & $1.50\times 10^{-9}$  \\
    $64$ & & $4.22$           & $2.67\times 10^{-9}$  \\
    \hline
    \multicolumn{4}{l}{\textbf{Memory-access cost $\gamma$} (single thread, \texttt{cblas\_ddot})} \\
    Tier & Working set & $\gamma$ (s/B) & $\gamma$ (s/word) \\
    \hline
    L1   & $\le 16$\,KB  & $4.0\times 10^{-12}$  & $3.2\times 10^{-11}$ \\
    L2   & $\le 1$\,MB   & $1.25\times 10^{-11}$ & $1.0\times 10^{-10}$ \\
    L3   & $\le 32$\,MB  & $1.5\times 10^{-11}$  & $1.2\times 10^{-10}$ \\
    DRAM & $> 32$\,MB    & $2.6\times 10^{-11}$  & $2.1\times 10^{-10}$ \\
    \hline
  \end{tabular}
\end{table}

\subsection{Mesh tuning and crossover}
\label{sec:mesh-tuning}

\Cref{eq:meshrule} predicts the per-iteration optimum at mesh dimension $p_c^{*} = \min(R, p)$, $p_r^{*} = p/p_c^{*}$, with $R{=}64$ on Perlmutter CPU.
At $p{=}256$ this gives $p_r^{*}{=}4, p_c^{*}{=}64$.
We validate against the full mesh sweep in \Cref{fig:transition}, which enumerates all nine factorizations $p_r p_c = 256$ with $b{=}32$, $s{=}4$, $\tau{=}10$ at fixed iteration count and reports per-iteration runtime.
The per-iteration minimum sits at $(p_r{=}8, p_c{=}32)$, the immediate-neighbor mesh of the rule's prediction.
The rule's $(p_r{=}4, p_c{=}64)$ is within $9\%$ of the empirical optimum on per-iteration runtime and is the time-to-target winner (\Cref{tab:strong-scaling-headline}) because fewer averaging groups ($p_r{=}4$ versus $8$) reduce local-update drift.
The sharp climb in per-iteration cost once $p_c$ falls below $32$ is consistent with the row Allreduce crossing the per-node rank boundary at $R{=}64$, after which the inter-node $\beta$ step inflates the Gram-BW term.

\Cref{tab:regime-crossover} reports per-iteration runtime at each dataset's empirically best HybridSGD mesh alongside FedAvg.
HybridSGD's per-iteration advantage emerges only on url, where the FedAvg Allreduce of the full $n{=}3.2\text{M}$ weight vector dominates per-iteration cost.
On news20, rcv1, and epsilon FedAvg's cheaper per-iteration compute ($4\bar z\gamma$ versus HybridSGD's $6\bar z\gamma$) wins.
As $\bar z$ grows (epsilon, news20) or $n$ shrinks (rcv1), the topology rule \eqref{eq:meshrule} pushes $p_c$ toward $p$ (the 1D FedAvg corner).
HybridSGD's per-iteration disadvantage on compute-bound datasets does not preclude time-to-target wins.
HybridSGD achieves $53\times$ on url and $14.6\times$ on news20 (\Cref{tab:strong-scaling-headline}) because the smaller column-Allreduce payload ($n/p_c$ versus $n$) and reduced local-update drift compound over many iterations.

\begin{table}[t]
  \centering
  \small
  \caption{Per-iteration runtime (ms) at each dataset's best HybridSGD mesh from the transition sweep (\Cref{fig:transition}), with $b{=}32$, $s{=}4$, $\tau{=}10$, cyclic partitioner.
  url and epsilon at $p{=}256$, news20 and rcv1 at their measured strong-scaling range ($p{=}64$ and $p{=}16$).
  Per-iteration values are not directly comparable across solvers because samples processed per iter differ.
  The time-to-target speedup headline is reported in \Cref{tab:strong-scaling-headline}.}
  \label{tab:regime-crossover}
  \begin{tabular}{lrrlrr}
    \hline
    Dataset & $n$ & $\bar{z}$ & Best mesh & FedAvg (ms/iter) & Hyb (ms/iter) \\
    \hline
    url     & 3{,}231{,}961 & 116        & $8 \times 32$  & $39.28$  & $0.557$ \\
    news20  & 1{,}355{,}191 & 455        & $1 \times 64$  & $3.113$  & $0.129$ \\
    rcv1    & 47{,}236      &  74        & $1 \times 16$  & $0.067$  & $0.056$ \\
    \hline
  \end{tabular}
\end{table}

\subsection{Irregularity-aware partitioning}
\label{sec:partitioning}

\Cref{sec:refinements} frames partitioning as a two-objective problem that minimizes $\kappa$ subject to a per-rank weight partitioning that fits in fast memory.
We implement three partitioners in our experiments (\Cref{fig:partition-comparison} illustrates the resulting column-to-rank assignments on a small skewed matrix).
\textbf{Rows} uses uniform contiguous $n/p_c$ columns per rank, which is cache-friendly but nnz-imbalanced on skewed data.
\textbf{Nnz} is a contiguous greedy partitioner that walks columns in order and advances to the next rank once its cumulative nnz reaches the target $m\bar z/p$, achieving $\kappa{\approx}1$ at the cost of storing heavy columns on a single rank.
On heavy-tailed distributions, this partitioner can produce a weight partitioning that does not fit in cache, which leads to a performance imbalance.
\textbf{Cyclic} uses round-robin column assignment, giving perfect $n_{\text{local}}{=}n/p_c$ with near-optimal $\kappa$ in expectation.

\begin{figure*}[t]
  \centering
  \begin{tikzpicture}
\begin{groupplot}[
  group style={
    group size=3 by 1,
    horizontal sep=0.45cm,
  },
  width=0.30\textwidth, height=0.36\textwidth,
  enlargelimits=false,
  xmin=-0.5, xmax=32, ymin=-0.5, ymax=64,
  xtick=\empty, ytick=\empty,
  axis on top, axis line style={draw=none},
  tick style={draw=none},
  scale only axis,
  title style={font=\small, yshift=-0.4em},
]

\nextgroupplot[title={rows partitioner ($\kappa{=}2.15$)}]
\fill[lbrows!8] (axis cs:0, 0) rectangle (axis cs:8, 64);
\fill[lbnnz!8] (axis cs:8, 0) rectangle (axis cs:16, 64);
\fill[lbcyclic!8] (axis cs:16, 0) rectangle (axis cs:24, 64);
\fill[gray!15] (axis cs:24, 0) rectangle (axis cs:32, 64);
\draw[lbrows!50!black, thick] (axis cs:0, 0) rectangle (axis cs:8, 64);
\draw[lbnnz!50!black, thick] (axis cs:8, 0) rectangle (axis cs:16, 64);
\draw[lbcyclic!50!black, thick] (axis cs:16, 0) rectangle (axis cs:24, 64);
\draw[gray, thick] (axis cs:24, 0) rectangle (axis cs:32, 64);
\addplot[only marks, mark=*, mark size=0.9pt, lbrows]
  table[x=col, y=row_inv, col sep=comma]{fig/data/part_rows_rank0.csv};
\addplot[only marks, mark=*, mark size=0.9pt, lbnnz]
  table[x=col, y=row_inv, col sep=comma]{fig/data/part_rows_rank1.csv};
\addplot[only marks, mark=*, mark size=0.9pt, lbcyclic]
  table[x=col, y=row_inv, col sep=comma]{fig/data/part_rows_rank2.csv};
\addplot[only marks, mark=*, mark size=0.9pt, gray!60!black]
  table[x=col, y=row_inv, col sep=comma]{fig/data/part_rows_rank3.csv};

\nextgroupplot[title={nnz partitioner ($\kappa{=}1.21$, $n^{\mathrm{max}}_{\mathrm{loc}}{=}14$)}]
\fill[lbrows!8] (axis cs:0, 0) rectangle (axis cs:3, 64);
\fill[lbnnz!8] (axis cs:3, 0) rectangle (axis cs:8, 64);
\fill[lbcyclic!8] (axis cs:8, 0) rectangle (axis cs:18, 64);
\fill[gray!15] (axis cs:18, 0) rectangle (axis cs:32, 64);
\draw[lbrows!50!black, thick] (axis cs:0, 0) rectangle (axis cs:3, 64);
\draw[lbnnz!50!black, thick] (axis cs:3, 0) rectangle (axis cs:8, 64);
\draw[lbcyclic!50!black, thick] (axis cs:8, 0) rectangle (axis cs:18, 64);
\draw[gray, thick] (axis cs:18, 0) rectangle (axis cs:32, 64);
\addplot[only marks, mark=*, mark size=0.9pt, lbrows]
  table[x=col, y=row_inv, col sep=comma]{fig/data/part_nnz_rank0.csv};
\addplot[only marks, mark=*, mark size=0.9pt, lbnnz]
  table[x=col, y=row_inv, col sep=comma]{fig/data/part_nnz_rank1.csv};
\addplot[only marks, mark=*, mark size=0.9pt, lbcyclic]
  table[x=col, y=row_inv, col sep=comma]{fig/data/part_nnz_rank2.csv};
\addplot[only marks, mark=*, mark size=0.9pt, gray!60!black]
  table[x=col, y=row_inv, col sep=comma]{fig/data/part_nnz_rank3.csv};

\nextgroupplot[title={cyclic partitioner ($\kappa{=}1.19$, balanced)}]
\addplot[only marks, mark=*, mark size=0.9pt, lbrows]
  table[x=col, y=row_inv, col sep=comma]{fig/data/part_cyclic_rank0.csv};
\addplot[only marks, mark=*, mark size=0.9pt, lbnnz]
  table[x=col, y=row_inv, col sep=comma]{fig/data/part_cyclic_rank1.csv};
\addplot[only marks, mark=*, mark size=0.9pt, lbcyclic]
  table[x=col, y=row_inv, col sep=comma]{fig/data/part_cyclic_rank2.csv};
\addplot[only marks, mark=*, mark size=0.9pt, gray!60!black]
  table[x=col, y=row_inv, col sep=comma]{fig/data/part_cyclic_rank3.csv};

\end{groupplot}
\end{tikzpicture}
  \caption{Three column-partitioning policies on the same column-skewed sparse matrix ($m{=}64$, $n{=}32$, $p_c{=}4$) as \Cref{fig:partitioning}.
  Nonzero color encodes rank.
  Rows (left) has $\kappa{=}2.15$ and $n_{\text{local}}{=}n/p_c$.
  The nnz partitioner (middle) has $\kappa{=}1.21$ and $n_{\text{local}}{\in}\{3,5,10,14\}$, exposing the potential for cache spill on overloaded ranks.
  Cyclic (right) has $\kappa{=}1.19$ and $n_{\text{local}}{=}n/p_c$ exactly.}
  \label{fig:partition-comparison}
\end{figure*}

\Cref{tab:partitioning} reports $\kappa$, max-rank $n_{\text{local}}$, and the per-iteration HybridSGD runtime for each partitioner at each dataset's best strong-scaling configuration.
The url dataset at $p_c{=}64$ is the worst case.
The nnz partitioner gives one rank $1.4{\times}10^6$ columns ($\sim$11\,MB of weights), which spills out of L2 ($1$\,MB/core) and L3 ($\sim$512\,KB/core) into DRAM.
This leads to a $2.4\times$ per-iteration penalty over the rows partitioner on url (\Cref{tab:partitioning}).
Cyclic partitioning is the consistent winner on heavy-tailed data ($1.86\times$ over rows on url, $3.5\times$ on news20, ties on rcv1).
The ordering cyclic $<$ rows $<$ nnz on url contradicts the assumption that balancing nnzs alone leads to better performance.
In contrast, our cost model predicts that the right partitioner solves both objectives simultaneously.

\begin{table}[t]
  \centering
  \caption{Partitioner statistics and the per-iteration HybridSGD runtime (ms) at each dataset's best configuration.
  $\kappa$ is per-rank nnz ratio (max/avg).
  $\max n_{\text{loc}}$ is the largest local column count.
  \textbf{Bold} marks fastest partitioner.
  For url, nnz partitioning concentrates $1.4{\times}10^6$ columns ($11.2$\,MB) onto one rank.
  This causes cache spill and yields a $2.4\times$ slower per-iteration runtime than the rows partitioner.}
  \label{tab:partitioning}
  \footnotesize
  \begin{tabular}{llrrr}
    \hline
    Dataset (config) & Partitioner & $\kappa$ & $\max\,n_{\text{loc}}$ & ms/iter \\
    \hline
    url    ($4{\times}64$, $p{=}256$) & rows   & $33.83$ & $50{,}499$       & $0.970$ \\
                                      & nnz    & $1.31$  & $1{,}409{,}992$  & $2.280$ \\
                                      & cyclic & $1.91$  & $50{,}499$       & $\mathbf{0.520}$ \\
    \hline
    news20 ($1{\times}64$, $p{=}64$)  & rows   & $18.73$ & $21{,}174$       & $0.326$ \\
                                      & nnz    & $1.05$  & $59{,}103$       & $0.142$ \\
                                      & cyclic & $1.18$  & $21{,}174$       & $\mathbf{0.093}$ \\
    \hline
    rcv1   ($1{\times}16$, $p{=}16$)  & rows   & $1.62$  & $2{,}952$        & $0.031$ \\
                                      & nnz    & $1.01$  & $4{,}333$        & $0.031$ \\
                                      & cyclic & $1.01$  & $2{,}952$        & $\mathbf{0.029}$ \\
    \hline
  \end{tabular}
\end{table}

\begin{figure*}[t]
  \centering
  \begin{tikzpicture}
\begin{axis}[
  plotstyle,
  width=0.55\textwidth, height=0.38\textwidth,
  xlabel={Column-skew exponent $\alpha$},
  ylabel={Per-iteration runtime (ms)},
  xmin=-0.05, xmax=1.05,
  ymin=0.20, ymax=0.34,
  xtick={0.0,0.2,0.5,0.8,1.0},
  ytick={0.22,0.24,0.26,0.28,0.30,0.32},
  minor tick num=1,
  legend pos=north west,
  legend cell align={left},
]

\addplot[color=lbrows, mark=*, solid]
  table[x=alpha, y=per_iter_ms, col sep=comma]
  {fig/data/synth_skew_rows.csv};
\addlegendentry{rows partitioner}

\addplot[color=lbnnz, mark=square*, solid]
  table[x=alpha, y=per_iter_ms, col sep=comma]
  {fig/data/synth_skew_nnz.csv};
\addlegendentry{nnz partitioner}

\addplot[color=lbcyclic, mark=triangle*, solid]
  table[x=alpha, y=per_iter_ms, col sep=comma]
  {fig/data/synth_skew_cyclic.csv};
\addlegendentry{cyclic partitioner}

\end{axis}
\end{tikzpicture}
  \caption{Per-iteration HybridSGD runtime on synthetic column-skewed data as a function of the column-skew exponent $\alpha$.
  The distribution is $p \propto (c{+}1)^{-\alpha}$, where $\alpha{=}0$ is uniform and $\alpha{=}1$ is Zipf.
  The synthetic dataset has $m{=}10^5$, $n{=}10^5$, and $\bar{z}{=}100$, with HybridSGD at $p{=}256$ and mesh $4{\times}64$.
  Cyclic partitioner is regime-invariant because cache-friendly $n_{\mathrm{local}}{=}n/p_c$ keeps the per-rank weight slab in L2 at every $\alpha$.
  Rows partitioner degrades smoothly with $\alpha$ as $\kappa$ rises and the sync-skew term grows.
  The nnz partitioner stays competitive at $n{=}10^5$ (heavy rank's slab $\sim$800\,KB fits L2) but spills cache on real url ($n{=}3.2$\,M) and becomes catastrophic (\Cref{tab:partitioning}).}
  \label{fig:synth-skew-regime}
\end{figure*}

A controlled synthetic skew sweep (\Cref{fig:synth-skew-regime}) corroborates these regime claims by showing cyclic partitioner is invariant to the column-skew exponent while rows partitioner degrades smoothly and nnz partitioner stays competitive at small $n$ but spills cache at large $n$.

\paragraph{Runtime breakdown on url}
\Cref{tab:timing-breakdown} reports a timing breakdown for url HybridSGD $4{\times}64$ for the three partitioners to illustrate the performance variation.
The $s$-step SGD comm timer grows from $142\,\mu$s under cyclic to $477\,\mu$s under rows ($\kappa{=}34$) to $1{,}905\,\mu$s under nnz (cache spill).
The increase in the comm timer is not due to the Allreduce communication bandwidth or latency, but rather idle time waiting for the slowest rank to synchronize.
The linear increase in the comm timer qualitatively matches the linear increase in $\kappa$ from cyclic to rows to nnz partitioner.

\begin{table}[t]
  \centering
  \caption{Timing breakdown for url HybridSGD $4{\times}64$ at $p{=}256$ under each partitioner (ms/iter).
  Metrics (loss computation, CSV logging) is pure overhead and is excluded from the algorithm-time total used in \Cref{tab:partitioning}.
  Our software allows for metrics collection to be turned off, hence the exclusion.
  Under nnz partitioning the slow rank's work is $11.6\times$ larger than the average (cache spill on $1.4\text{M}$-column weight slab), and the row-team Allreduce inherits the wait-for-slowest cost.}
  \label{tab:timing-breakdown}
  \footnotesize
  \begin{tabular}{lrrr}
    \hline
    Phase                & rows  & cyclic  & nnz   \\
    \hline
    metrics (loss CSV, bench overhead) & $0.348$ & $0.223$ & $0.282$ \\
    Gram matrix          & $0.421$ & $0.071$ & $0.851$ \\
    $s$-step SGD comm (row Allreduce + sync skew) & $0.477$ & $0.142$ & $1.905$ \\
    FedAvg comm (col Allreduce) & $0.122$ & $0.095$ & $0.403$ \\
    weights update       & $0.020$ & $0.018$ & $0.522$ \\
    SpGEMV               & $0.012$ & $0.007$ & $0.207$ \\
    memory ops, correction, startup & $0.030$ & $0.040$ & $0.057$ \\
    \hline
    \textbf{algorithm total} & $\mathbf{0.622}$ & $\mathbf{0.291}$ & $\mathbf{2.058}$ \\
    total with metrics       & $0.970$ & $0.514$ & $2.340$ \\
    \hline
  \end{tabular}
\end{table}

The refined predictor (\Cref{sec:refinements}) gets the partitioner ranking correct on all $9$ (dataset, partitioner) cells we measured and is within $1\%$ on rcv1, $30\%$ on news20 cyclic, and $17$ to $60\%$ on url.
The url residual is dataset-specific and partitioner-independent ($\approx 200\,\mu$s/iter) and stems from MKL \texttt{mkl\_sparse\_syrkd} per-call overhead that the bandwidth-bound model does not capture.
\Cref{fig:predicted-vs-measured} visualizes the predictor against measurement.
The rcv1 dataset sits on the diagonal, news20 cyclic and news20 nnz lie inside the $0.5$ to $2\times$ band, and the outliers (all three url cells and news20 rows) reflect the same MKL \texttt{sparse\_syrkd} overhead.
Qualitatively, our predictor's partitioner ranking matches empirical measurements on all $9$ dataset and partitioner combinations.
Based on these results, cyclic partitioning is best for column-skewed sparse data, followed by nnz partitioning for balanced sparse data.

\begin{figure*}[t]
  \centering
  \begin{tikzpicture}
\begin{axis}[
  plotstyle,
  width=0.50\textwidth, height=0.50\textwidth,
  xmode=log, ymode=log,
  xlabel={Measured (ms/iter)},
  ylabel={Predicted (ms/iter)},
  xmin=0.015, xmax=4.0,
  ymin=0.015, ymax=4.0,
  log basis x=10, log basis y=10,
  legend pos=north west,
  legend columns=2,
  legend cell align={left},
  legend style={font=\footnotesize, draw=black!30, fill=white, fill opacity=0.85,
                column sep=4pt},
]

\addplot[name path=bandhi, gray!40, no marks, domain=0.015:4.0, samples=2,
         forget plot] {2*x};
\addplot[name path=bandlo, gray!40, no marks, domain=0.015:4.0, samples=2,
         forget plot] {0.5*x};
\addplot[gray!15, draw=none, forget plot] fill between[of=bandlo and bandhi];

\addplot[black!35, solid, no marks, domain=0.015:4.0, samples=2,
         forget plot] {x};

\addplot[dsurl, mark=*, only marks, mark size=3pt, forget plot]
  table[x=measured_ms, y=predicted_ms, col sep=comma]{fig/data/pvm_url_rows.csv};
\addplot[dsurl, mark=square*, only marks, mark size=2.5pt, forget plot]
  table[x=measured_ms, y=predicted_ms, col sep=comma]{fig/data/pvm_url_nnz.csv};
\addplot[dsurl, mark=triangle*, only marks, mark size=3pt, forget plot]
  table[x=measured_ms, y=predicted_ms, col sep=comma]{fig/data/pvm_url_cyclic.csv};

\addplot[dsnews20, mark=*, only marks, mark size=3pt, forget plot]
  table[x=measured_ms, y=predicted_ms, col sep=comma]{fig/data/pvm_news20_rows.csv};
\addplot[dsnews20, mark=square*, only marks, mark size=2.5pt, forget plot]
  table[x=measured_ms, y=predicted_ms, col sep=comma]{fig/data/pvm_news20_nnz.csv};
\addplot[dsnews20, mark=triangle*, only marks, mark size=3pt, forget plot]
  table[x=measured_ms, y=predicted_ms, col sep=comma]{fig/data/pvm_news20_cyclic.csv};

\addplot[dsrcv1, mark=*, only marks, mark size=3pt, forget plot]
  table[x=measured_ms, y=predicted_ms, col sep=comma]{fig/data/pvm_rcv1_rows.csv};
\addplot[dsrcv1, mark=square*, only marks, mark size=2.5pt, forget plot]
  table[x=measured_ms, y=predicted_ms, col sep=comma]{fig/data/pvm_rcv1_nnz.csv};
\addplot[dsrcv1, mark=triangle*, only marks, mark size=3pt, forget plot]
  table[x=measured_ms, y=predicted_ms, col sep=comma]{fig/data/pvm_rcv1_cyclic.csv};

\addlegendimage{dsurl,   solid, no marks, line width=3pt}
\addlegendentry{url}
\addlegendimage{black, mark=*,        only marks, mark size=2.5pt}
\addlegendentry{rows}
\addlegendimage{dsnews20, solid, no marks, line width=3pt}
\addlegendentry{news20}
\addlegendimage{black, mark=square*,  only marks, mark size=2.5pt}
\addlegendentry{nnz}
\addlegendimage{dsrcv1,  solid, no marks, line width=3pt}
\addlegendentry{rcv1}
\addlegendimage{black, mark=triangle*, only marks, mark size=2.5pt}
\addlegendentry{cyclic}

\end{axis}
\end{tikzpicture}
  \caption{Predicted vs measured per-iteration runtime across the $9$ measured (dataset, partitioner) cells (url $4{\times}64$ at $p{=}256$, news20 $1{\times}64$ at $p{=}64$, rcv1 $1{\times}16$ at $p{=}16$).
  Solid $y{=}x$, shaded band $0.5$ to $2\times$, color = dataset, marker = partitioner.
  Outliers (url rows, nnz, cyclic, news20 rows) reflect a constant dataset-specific \texttt{sparse\_syrkd} overhead.}
  \label{fig:predicted-vs-measured}
\end{figure*}

\subsection{Solver-family transition}
\label{sec:transition}

\Cref{fig:transition} sweeps $p_r$ across all factorizations of $p$ using the cyclic partitioner for each dataset.
We trace the full continuum from 1D $s$-step SGD ($p_r{=}1$) through interior HybridSGD meshes to FedAvg ($p_r{=}p$).
The url panel exhibits a U-shaped per-iteration profile with an empirical minimum at $p_r{=}8$ ($0.557$\,ms/iter).
The topology rule \eqref{eq:meshrule} predicts $p_r{=}4, p_c{=}64$ ($0.606$\,ms/iter), the immediate-neighbor mesh and within $9\%$ of the empirical optimum.
On news20 ($p{=}64$) and rcv1 ($p{=}16$) the rule's $p_c{=}\min(R,p)$ saturates at $p$, so it predicts the 1D $s$-step SGD corner ($p_r{=}1$).
The empirical curves are monotone with the minimum at the same corner on both panels.

\begin{figure*}[t]
  \centering
  \begin{tikzpicture}
\begin{groupplot}[
  group style={
    group size=3 by 1,
    horizontal sep=0.45cm,
  },
  xmode=log, log basis x=2,
  ymode=log,
  xlabel={$p_r$},
  tick label style={font=\footnotesize},
  label style={font=\small},
  title style={font=\small},
  legend style={font=\scriptsize, draw=black!30, fill=white, fill opacity=0.85,
                legend cell align=left},
  every axis plot/.append style={line width=0.8pt, mark size=2.5pt},
  width=0.30\textwidth, height=0.30\textwidth,
  grid=major, grid style={gray!25},
]

\nextgroupplot[
  title={url ($p{=}256$)},
  ylabel={Per-iteration runtime (ms)},
  ymin=0.35, ymax=55,
  xmin=0.7, xmax=350,
  xtick={1,4,16,64,256},
  legend pos=north west,
]
\addplot[lbnnz, solid, mark=square*, mark options={solid,fill=lbnnz!40}]
  table[x=pr, y=per_iter_ms, col sep=comma]{fig/data/trans_url_line.csv};
\addlegendentry{HybridSGD}

\addplot[lbrows, only marks, mark=triangle*, mark options={solid,fill=lbrows!40}]
  table[x=pr, y=per_iter_ms, col sep=comma]{fig/data/trans_url_mbcasgd.csv};
\addlegendentry{1D $s$-step SGD}

\addplot[black, only marks, mark=o, mark options={solid,fill=white}]
  table[x=pr, y=per_iter_ms, col sep=comma]{fig/data/trans_url_dcsgd_s1.csv};
\addlegendentry{FedAvg ($s{=}1$)}

\nextgroupplot[
  title={news20 ($p{=}64$)},
  ymin=0.09, ymax=4.5,
  xmin=0.7, xmax=85,
  xtick={1,4,16,64},
  yticklabels={},
  legend pos=north west,
]
\addplot[lbnnz, solid, mark=square*, mark options={solid,fill=lbnnz!40}]
  table[x=pr, y=per_iter_ms, col sep=comma]{fig/data/trans_news20_line.csv};
\addlegendentry{HybridSGD}

\addplot[lbrows, only marks, mark=triangle*, mark options={solid,fill=lbrows!40}]
  table[x=pr, y=per_iter_ms, col sep=comma]{fig/data/trans_news20_mbcasgd.csv};
\addlegendentry{1D $s$-step SGD}

\addplot[black, only marks, mark=o, mark options={solid,fill=white}]
  table[x=pr, y=per_iter_ms, col sep=comma]{fig/data/trans_news20_dcsgd_s1.csv};
\addlegendentry{FedAvg ($s{=}1$)}

\nextgroupplot[
  title={rcv1 ($p{=}16$)},
  ymin=0.04, ymax=0.25,
  xmin=0.7, xmax=22,
  xtick={1,2,4,8,16},
  yticklabels={},
  legend pos=north west,
]
\addplot[lbnnz, solid, mark=square*, mark options={solid,fill=lbnnz!40}]
  table[x=pr, y=per_iter_ms, col sep=comma]{fig/data/trans_rcv1_line.csv};
\addlegendentry{HybridSGD}

\addplot[lbrows, only marks, mark=triangle*, mark options={solid,fill=lbrows!40}]
  table[x=pr, y=per_iter_ms, col sep=comma]{fig/data/trans_rcv1_mbcasgd.csv};
\addlegendentry{1D $s$-step SGD}

\addplot[black, only marks, mark=o, mark options={solid,fill=white}]
  table[x=pr, y=per_iter_ms, col sep=comma]{fig/data/trans_rcv1_dcsgd_s1.csv};
\addlegendentry{FedAvg ($s{=}1$)}

\end{groupplot}
\end{tikzpicture}
  \caption{Per-iteration runtime vs $p_r$ for three LIBSVM datasets using cyclic partitioner and sweeping all factorizations $p_r p_c = p$.
  Endpoints are 1D $s$-step SGD (left triangle, $p_r{=}1$, $\tau{=}10^4$) and FedAvg (open circle, $s{=}1$).
  url exhibits the U-shape predicted by the topology rule \eqref{eq:meshrule}.
  The empirical minimum lies at $p_r{=}8$ ($0.557$\,ms) and the rule predicts $p_r{=}4, p_c{=}64$ ($0.606$\,ms), the immediate-neighbor mesh and within $9\%$ of the optimum.
  On news20 and rcv1 the rule's $p_c{=}\min(R,p)$ selects $p_c = p$ which correctly predicts 1D $s$-step SGD as the optimum.}
  \label{fig:transition}
\end{figure*}

\subsection{Time-to-target loss}
\label{sec:t-to-target}

We perform offline tuning on the learning rate and set it to $\eta{=}0.01$, which yields convergence on all solvers.
We report single-trial times with inter-trial variance ${\le}\,4\%$ (see \Cref{tab:strong-scaling-headline}) and pick each solver's best $p$.
For HybridSGD, we also pick the best mesh and partitioner.
Per-dataset target losses (\Cref{tab:strong-scaling-headline}) are calibrated to the slower solver's terminal loss within the iteration budget.
This corresponds to FedAvg on url ($0.50$) and news20 ($0.45$).
All solvers reach the same loss on rcv1 ($0.40$). 
Epsilon's loss decays slowly within the iteration budget, so we chose a mid-range value ($0.65$) as the target.
This dataset stresses the convergence rate of the solver.
\Cref{fig:convergence-runtime} shows the convergence behavior of HybridSGD, 1D $s$-step SGD, and FedAvg for each solver's best configuration.
We show the convergence behavior on the url, news20, and rcv1 datasets.
HybridSGD is fastest to the target on url with speedups of $2.8\times$ over 1D $s$-step SGD and $53\times$ over FedAvg.
The speedup over FedAvg arises from two factors.
First, HybridSGD reduces the weights vector Allreduce bandwidth cost to $n/p_c = 101{,}000$ words vs FedAvg's $n = 3{,}231{,}961$).
Second, HybridSGD converges faster since it uses $p_r$ processors instead of $p$ processors in the row dimension (see \Cref{tab:comp-costs} for convergence rate difference).
On rcv1, ties emerge from near-identical trajectories.

\paragraph{Solution quality and the convex objective}
\Cref{eq:logreg} is convex, so every solver minimizes the same objective and, at a comparable step size and a sufficient iteration budget, reaches the same loss (\Cref{fig:convergence-runtime}).
The speedups we report are therefore time-to-equal-loss, not accuracy trade-offs. 
The winning solver reaches a fixed objective value sooner, it does not converge to a different solution.

\begin{figure*}[t]
  \centering
  \begin{tikzpicture}
\begin{groupplot}[
  group style={
    group size=3 by 1,
    horizontal sep=1.1cm,
  },
  ymode=log,
  xlabel={Time (s)},
  tick label style={font=\footnotesize},
  label style={font=\small},
  title style={font=\small},
  legend style={font=\scriptsize, draw=black!30, fill=white, fill opacity=0.85,
                legend cell align=left, legend columns=-1},
  every axis plot/.append style={line width=0.8pt, mark size=2pt},
  width=0.28\textwidth, height=0.30\textwidth,
  grid=major, grid style={gray!25},
]

\nextgroupplot[
  title={url ($p{=}256$)},
  ylabel={Objective (loss)},
  ymin=0.1, ymax=0.75,
  xmin=0, xmax=12.5,
  legend to name=convlegend,
]
\addplot[black, solid, mark=o, mark options={solid,fill=white}]
  table[x=time_s, y=objective, col sep=comma]{fig/data/conv_url_dc.csv};
\addlegendentry{FedAvg}
\addplot[lbnnz, solid, mark=square*, mark options={solid,fill=lbnnz!40}]
  table[x=time_s, y=objective, col sep=comma]{fig/data/conv_url_hyb_cyclic.csv};
\addlegendentry{HybridSGD}
\addplot[lbrows, dashed, mark=triangle*, mark options={solid,fill=lbrows!40}]
  table[x=time_s, y=objective, col sep=comma]{fig/data/conv_url_mbcasgd.csv};
\addlegendentry{1D $s$-step SGD}
\addplot[gray!60, dotted, no marks, line width=0.7pt, domain=0:12.5, samples=2,
         forget plot] {0.50};

\nextgroupplot[
  title={epsilon ($p_{\mathrm{FedAvg}}{=}32,\,p_{\mathrm{Hyb}}{=}512$)},
  ymin=0.62, ymax=0.7,
  xmin=0, xmax=0.7,
  xlabel={},
]
\addplot[black, solid, mark=o, mark options={solid,fill=white}]
  table[x=time_s, y=objective, col sep=comma]{fig/data/conv_epsilon_dc.csv};
\addplot[lbnnz, solid, mark=square*, mark options={solid,fill=lbnnz!40}]
  table[x=time_s, y=objective, col sep=comma]{fig/data/conv_epsilon_hyb_4x128.csv};
\addplot[lbrows, dashed, mark=triangle*, mark options={solid,fill=lbrows!40}]
  table[x=time_s, y=objective, col sep=comma]{fig/data/conv_epsilon_hyb_cyclic.csv};
\addplot[gray!60, dotted, no marks, line width=0.7pt, domain=0:0.7, samples=2,
         forget plot] {0.65};

\nextgroupplot[
  title={rcv1 ($p_{\mathrm{FedAvg}}{=}8,\,p_{\mathrm{Hyb}}{=}16$)},
  ymin=0.35, ymax=0.75,
  xmin=0, xmax=0.17,
]
\addplot[black, solid, mark=o, mark options={solid,fill=white}]
  table[x=time_s, y=objective, col sep=comma]{fig/data/conv_rcv1_dc.csv};
\addplot[lbnnz, solid, mark=square*, mark options={solid,fill=lbnnz!40}]
  table[x=time_s, y=objective, col sep=comma]{fig/data/conv_rcv1_hyb_4x4.csv};
\addplot[lbrows, dashed, mark=triangle*, mark options={solid,fill=lbrows!40}]
  table[x=time_s, y=objective, col sep=comma]{fig/data/conv_rcv1_hyb_cyclic.csv};
\addplot[gray!60, dotted, no marks, line width=0.7pt, domain=0:0.17, samples=2,
         forget plot] {0.40};

\end{groupplot}
\node[anchor=north, yshift=-0.3cm] at (group c2r1.south) {\ref{convlegend}};
\end{tikzpicture}
  \caption{Training loss vs runtime on Perlmutter CPU at each solver's optimal $p$.
  The url dataset uses $p{=}256$ for both solvers, epsilon uses $p_{\mathrm{FedAvg}}{=}32$ and $p_{\mathrm{Hyb}}{=}512$, and rcv1 uses $p_{\mathrm{FedAvg}}{=}8$ and $p_{\mathrm{Hyb}}{=}16$.
  HybridSGD uses cyclic partitioner.
  Dotted horizontal line on each panel marks the time-to-target loss threshold used in \Cref{tab:strong-scaling-headline} ($0.50$ for url, $0.65$ for epsilon, $0.40$ for rcv1).
  Trajectories continue past the threshold because each solver runs for a fixed iteration budget rather than stopping at target.
  On url, FedAvg takes $\sim$10\,s to reach loss $0.45$, while HybridSGD reaches $0.25$ in $\sim$1\,s.
  On epsilon, FedAvg descends faster initially.
  On rcv1, all solvers converge in comparable runtime.}
  \label{fig:convergence-runtime}
\end{figure*}

\begin{table}[t]
  \centering
  \caption{Time-to-target loss on Perlmutter CPU ($64$ ranks/node $\times$ $2$ cores/task, FP64, $\eta{=}0.01$, $b{=}32$, $s{=}4$, $\tau{=}10$).
  ``Best FedAvg'' picks FedAvg's fastest configuration over $p$, while ``Best HybridSGD'' picks HybridSGD's fastest over $p$, mesh, and partitioner.
  We performed five trials and report the standard deviation as percentage of the mean for HybridSGD.
  Speedups above $1$ indicate a HybridSGD advantage.
  The dense epsilon dataset falls in the compute-dominated regime, where FedAvg's cheaper per-iteration computation outweighs HybridSGD's communication savings; this regime boundary is predicted by the cost model.
}
  \label{tab:strong-scaling-headline}
  \footnotesize
  \resizebox{\textwidth}{!}{%
  \begin{tabular}{lcrlrr}
    \hline
    Dataset & Target & Best FedAvg & Best HybridSGD & Variance (HybridSGD) & Speedup \\
            & (loss)  & ($p$, time) & (mesh, $p$, partitioner, time) & & (Hyb/FedAvg) \\
    \hline
    url     & $0.50$ & $256$,~$9.48$\,s & $8{\times}32$,~$256$, cyclic, $\mathbf{0.179}$\,s & $\pm 2.6\%$ & $\mathbf{53.0\times}$ \\
    news20  & $0.45$ & $8$,~$6.80$\,s   & $1{\times}64$,~$64$,  cyclic, $\mathbf{0.465}$\,s & $\pm 1.7\%$ & $\mathbf{14.6\times}$ \\
    rcv1    & $0.40$ & $8$,~$0.156$\,s  & $1{\times}16$,~$16$,  cyclic, $0.141$\,s          & $\pm 1.4\%$ & $1.11\times$ \\
    epsilon & $0.65$ & $32$,~$0.20$\,s  & $1{\times}512$,~$512$ (dense, partitioner irrelevant), $0.449$\,s & $\pm 0.8\%$ & $0.44\times$ (FedAvg $2.25\times$) \\
    \hline
  \end{tabular}%
  }
\end{table}

\Cref{fig:strong-scaling-url} (left) shows per-iteration speedup on url as a function of $p$ for FedAvg, HybridSGD $1{\times}p$, and HybridSGD $8{\times}(p/8)$ using the cyclic partitioner.
FedAvg's per-iteration time is flat at ${\approx}1\times$ since increasing $p$ also increases work per rank.
Setting $\tau{=}10$ keeps the AllReduce overhead below $1\%$ of per-iteration cost, so the communication bottleneck does not appear in the FedAvg experiment.
HybridSGD $1{\times}p$ is similarly flat near $1\times$.
This bottleneck is due to url's extreme column skew and is not a communication bottleneck from Gram Allreduce size.
HybridSGD $8{\times}(p/8)$ reaches $5.7\times$ at $p{=}1024$.
This is achieved through the use of a smaller row team ($p_c{=}p/8$) which shrinks the weights vector and the Gram Allreduce size while keeping the row team intra-node through $p{=}512$.

\Cref{fig:strong-scaling-url} (right) shows per-iteration speedup on a synthetic uniform nnz distributed sparse matrix.
We observe speedups from 1D $s$-step SGD when column skew is removed.
HybridSGD continues to show greater per-iteration speedup than the other solvers due to the additional mesh dimension, which allows for finding a better Allreduce bandwidth balance point than the extremes of FedAvg and 1D $s$-step SGD.

In both experiments, we observe superlinear speedups at $p = \{128, 256\}$ due to caching effects, where cache size grows while local problem size decreases.
This superlinear speedup disappears once communication becomes a larger fraction of runtime.
\begin{figure*}[t]
  \centering
  \begin{tikzpicture}
\begin{axis}[
  plotstyle,
  width=0.48\textwidth, height=0.40\textwidth,
  xmode=log, ymode=log,
  log basis x=2,
  log basis y=2,
  xlabel={$p$ (ranks)},
  ylabel={Speedup over $p{=}64$},
  xmin=48, xmax=5500,
  ymin=0.4, ymax=72,
  xtick={64,128,256,512,1024,2048,4096},
  xticklabels={64,128,256,512,1024,2048,4096},
  ytick={0.5,1,2,4,8,16,32,64},
  yticklabels={$\frac{1}{2}$,1,2,4,8,16,32,64},
  tick label style={font=\footnotesize},
  legend pos=north west,
  legend cell align=left,
]

\addplot[black!30, dashed, no marks, domain=64:4096, samples=60] {x/64};
\addlegendentry{ideal}

\addplot[black, solid, mark=o, mark options={solid,fill=white}]
  coordinates {
    (64,   1.0000)
    (128,  1.0165)
    (256,  1.0031)
    (512,  1.0175)
    (1024, 1.0137)
    (2048, 1.0109)
    (4096, 1.0032)
  };
\addlegendentry{FedAvg}

\addplot[lbrows, solid, mark=triangle*, mark options={solid,fill=lbrows!40}]
  coordinates {
    (64,   1.0000)
    (128,  1.0534)
    (256,  1.0543)
    (512,  0.9236)
    (1024, 0.9278)
    (2048, 0.8355)
    (4096, 0.8769)
  };
\addlegendentry{1D $s$-step SGD}

\addplot[lbnnz, solid, mark=square*, mark options={solid,fill=lbnnz!40}]
  coordinates {
    (64,   1.0000)
    (128,  2.4061)
    (256,  4.1817)
    (512,  5.2609)
    (1024, 5.6826)
    (2048, 4.9630)
    (4096, 4.8366)
  };
\addlegendentry{Hyb $8{\times}(p/8)$}

\end{axis}
\end{tikzpicture}%
  \hfill
  \begin{tikzpicture}
\begin{axis}[
  plotstyle,
  width=0.48\textwidth, height=0.40\textwidth,
  xmode=log, ymode=log,
  log basis x=2,
  log basis y=2,
  xlabel={$p$ (ranks)},
  ylabel={Speedup over $p{=}64$},
  xmin=48, xmax=2800,
  ymin=0.7, ymax=48,
  xtick={64,128,256,512,1024,2048},
  xticklabels={64,128,256,512,1024,2048},
  ytick={1,2,4,8,16,32},
  yticklabels={1,2,4,8,16,32},
  tick label style={font=\footnotesize},
  legend pos=north west,
  legend cell align=left,
]

\addplot[black!30, dashed, no marks, domain=64:2048, samples=60] {x/64};
\addlegendentry{ideal}

\addplot[black, solid, mark=o, mark options={solid,fill=white}]
  coordinates {
    (64,  1.000)
    (128, 1.003)
    (256, 1.003)
    (512, 1.003)
    (1024,0.981)
    (2048,0.976)
  };
\addlegendentry{FedAvg}

\addplot[lbrows, solid, mark=triangle*, mark options={solid,fill=lbrows!40}]
  coordinates {
    (64,  1.000)
    (128, 1.386)
    (256, 1.651)
    (512, 1.921)
    (1024,3.413)
    (2048,3.411)
  };
\addlegendentry{1D $s$-step SGD}

\addplot[lbnnz, solid, mark=square*, mark options={solid,fill=lbnnz!40}]
  coordinates {
    (64,  1.000)
    (128, 2.294)
    (256, 4.312)
    (512, 7.353)
    (1024,11.053)
    (2048,10.673)
  };
\addlegendentry{Hyb $4{\times}(p/4)$}

\end{axis}
\end{tikzpicture}
  \caption{\emph{Left}: Per-iteration speedup on url ($b{=}32$, $s{=}4$, $\tau{=}10$, $\eta{=}0.01$, cyclic partitioner).
  FedAvg and 1D $s$-step SGD are flat at ${\approx}1\times$.
  Hyb $8{\times}(p/8)$ reaches $5.7\times$ at $p{=}1024$ by shrinking the row-weight and Gram Allreduce message sizes.
  \emph{Right}: Per-iteration speedup on a synthetic uniform sparse matrix ($m{=}2^{21}$, $n{=}3{,}145{,}728$, density $0.4\%$).
  This isolates solver performance from column skew.
  FedAvg is flat at ${\approx}1\times$ because the full-$n$ Allreduce ($25$\,MB) dominates regardless of $p$.
  HybridSGD $4{\times}(p/4)$ reaches $11.1\times$ at $p{=}1024$ ($69\%$ efficiency) and $10.7\times$ at $p{=}2048$ ($33\%$ efficiency).
  Superlinear speedup reflects cache effects as $p$ increases and as the local subproblem shrinks.}
  \label{fig:strong-scaling-url}
\end{figure*}

\section{Conclusion}
HybridSGD is a 2D-parallel SGD method that generalizes 1D $s$-step SGD and 1D FedAvg into a continuous family indexed by the mesh split $p = p_r \times p_c$, recovering each 1D baseline at an extreme grid dimension and exposing the interior meshes as a productive design space.
The two 1D approaches were previously seen as incompatible because each fixes a single data partitioning axis.
We show that the partitionings are in fact compatible because they live on orthogonal axes of the 2D mesh.
This compatibility lets a single solver navigate the trade-off between the recurrence-unrolling regime of $s$-step SGD and the deferred-averaging regime of FedAvg by varying $(p_r, p_c)$.
The framework is implemented in C++/MPI on Intel MKL sparse BLAS with three selectable column partitioners (rows, nonzero-greedy, cyclic) and a configurable mesh, exposing the full $(p_r, p_c, s, b, \tau, \text{partitioner})$ design space for evaluation.

We present a closed-form $\alpha$-$\beta$-$\gamma$ cost model that decomposes the per-epoch wall into compute, latency, Gram-bandwidth, and sync-bandwidth contributions, with optima for the recurrence length $s$ and mini-batch size $b$ and a bandwidth balance condition $(s{-}1)sb^2\tau p_c \approx 2n$ that distinguishes the Gram-bandwidth and sync-bandwidth regimes.
We also present a cache-aware compute parameter $\gamma(W)$, a rank-aware bandwidth parameter $\beta(p)$ that captures the intra-to-inter-node transition, and a nonzero-imbalance multiplier $\kappa$ to capture load-imbalance and characterize slow ranks.
We present a parameter-free topology rule, $p_c^{*} = \max(\lceil nw/L_{\text{cap}}\rceil,\,\min(R, p))$, which sets the mesh split from the per-node rank count $R$ and the per-core cache capacity $L_{\text{cap}}$ alone.
The rule reduces mesh selection to two machine constants and removes the transcendental fixed-point that the analytic optimum would otherwise require.

Cache-aware partitioning is a two-objective constrained problem.
We minimize the nonzero imbalance $\kappa$ subject to a per-rank cache-footprint capacity.
A nonzero-balanced contiguous partitioner achieves $\kappa \approx 1$ at the cost of overloading the rank that absorbs the heavy columns, which spills out of L2 into DRAM and degrades per-iteration runtime by up to $2.4\times$ on column-skewed data.
A cyclic partitioner satisfies both objectives in expectation, bounding $n_{\text{local}}$ exactly to $n/p_c$ while keeping $\kappa$ near $1$, at the cost of a column permutation in the reader.
The cost model predicts and measurement confirms that no single partitioner dominates across the LIBSVM suite.
The appropriate choice is dataset-conditional and is a first-class HybridSGD parameter.

At the strong-scaling limit on LIBSVM, HybridSGD attains $\mathbf{53\times}$ time-to-target-loss speedup on url and $\mathbf{14.6\times}$ on news20 over FedAvg, matches FedAvg on rcv1, and is outperformed by FedAvg on dense epsilon where the cheaper per-iteration compute dominates.
The qualitative regime crossover at each dataset is predicted by the cost model and confirmed empirically.
These results position HybridSGD as a regime-aware solver, most beneficial in sparse, high-dimensional, communication- and skew-limited settings, rather than a universal replacement for FedAvg.
The speedups decompose into a per-iteration throughput component (driven by the smaller column-Allreduce payload of $n/p_c$ versus the full-$n$ payload of FedAvg) and a sample-efficiency component (driven by the smaller effective per-sync batch satisfying Stich's bound at large $p$, where FedAvg's local-update drift inflates iterations-to-loss).
The framework extends directly to other convex losses for which $s$-step methods apply, with the cost model and topology rule transferring unchanged.

\section*{Acknowledgements}
Thanks to Grey Ballard for numerous, helpful discussions on this work.
This work was supported by the U.S. Department of Energy, Office of Science, Advanced Scientific Computing Research (ASCR) program under Award Number DE-SC-0023296.
This research used resources of the National Energy Research Scientific Computing Center (NERSC), a Department of Energy Office of Science User Facility using NERSC award ASCR-ERCAP0026261 and ASCR-ERCAP0028617.

\bibliographystyle{abbrv}
\bibliography{refs}
\end{document}